\newfont{\smcal}{cmu10 scaled 1200}
\newfont{\handw}{cmmi10 scaled 1200}
\newfont{\handws}{cmmi10 scaled 800}
\newtheorem{Prop}{Proposition}[section]
\newtheorem{Th}[Prop]{Theorem}
\newtheorem{Rm}[Prop]{Remark}
\newtheorem{Def}[Prop]{Definition}
\newtheorem{Cor}[Prop]{Corollary}
\newtheorem{Ex}[Prop]{Example}
\newcommand{\var}{\mbox{\rm Var}}
\DeclareMathOperator{\supp}{supp}
\DeclareMathOperator{\sign}{sign}
\begin{document}
      \title{Intrinsic Means on the Circle: \\Uniqueness, Locus and Asymptotics}
        \author{Thomas Hotz\footnote{supported by DFG CRC 803 and SAMSI 2010 workshop AOOD.}~~and Stephan Huckemann\footnote{supported by DFG HU 1575/2-1 and SAMSI 2010 workshop AOOD.}
\\[6pt]\small Institute for Mathematical Stochastics, University of Goettingen
\\\small Goldschmidtstrasse 7, 37077 Goettingen, Germany
\\\small \{hotz,huckeman\}@math.uni-goettingen.de}
	
        \maketitle

\begin{abstract}
This paper gives a comprehensive treatment of local uniqueness,\linebreak asymp\-to\-tics and numerics for intrinsic means on the circle. It turns out that local uniqueness as well as rates of convergence are governed by the distribution near the antipode. In a nutshell, if the distribution there is locally less than uniform, we have local uniqueness and asymptotic normality with a rate of $n^{-1/2}$. With increased proximity to the uniform distribution the rate can be arbitrarly slow, and in the limit, local uniqueness is lost. Further, we give general distributional conditions, e.g. unimodality, that ensure global uniqueness.
Along the way, we discover that sample means can occur only at the vertices of a regular polygon which allows to compute intrinsic sample means in linear time from sorted data. This algorithm is finally applied in a simulation study demonstrating the dependence of the convergence rates on the behavior of the density at the antipode.
\end{abstract}
\par
\vspace{9pt}
\noindent {\it Key words and phrases:} circular statistics, intrinsic mean, central limit theorem, asymptotic normality, convergence rate

\par
\vspace{9pt}
\noindent {\it AMS 2000 Subject Classification:} \begin{minipage}[t]{6cm}
Primary 62H11\\ Secondary 60F05
 \end{minipage}
\par

\section{Introduction}
 
	The need for statistical analysis of directional data arises in many applications, be it in the study of wind direction, animal migration or geological crack development. An overview of and introduction into this field can be found in \cite{MJ00}. Until today, nonparametric inferential techniques employing the \emph{intrinsic mean}, i.e. the minimizer of expected squared \emph{angular} distances, rest on the assumption, that distributions underlying circular data have no mass in an entire interval opposite to an intrinsic mean. This assumption, however, is not met by any of the standard distributions for directional data, e.g. Fisher-von Mises, Bingham, wrapped normal or wrapped Cauchy distributions.

	The reason for this mathematical assumption lies in the fact that the intrinsic distance is not differentiable for two antipodal points. However, the central limit theorem for intrinsic means on general manifolds derived by \cite{BP05}, cf. also \cite{H_Procrustes_10}, utilizes a Taylor expansion of the \emph{intrinsic variance} by differentiating under the integral sign, i.e. by differentiating the squared intrinsic distance.
	In consequence, this has left the derivation of a central limit theorem along with convergence rates for circular data of most realistic scenarios and distributions an open problem until now.

	Here, we fill this gap and provide for a comprehensive solution. In particular, we show that, 
	if the distribution near the antipode stays below the uniform distribution, asymptotic normality with the $n^{-1/2}$ rate well known from Euclidean statistics remains valid, the asymptotic variance however increases with proximity to the uniform distribution.
	Furthermore, if the distribution at the antipode differs from the uniform distribution only in higher order, then the asymptotic rate is accordingly lowered by this power. If and only if the distribution at the antipode is locally uniform the intrinsic mean is no longer locally unique. Moreover, at an antipode of an intrinsic mean, there may never be more mass than that of the uniform distribution. In particular there can never be a point mass. For these reasons, sample means are always locally unique and we will see that they can only occur on the vertices of a regular polygon. Hence from ordered data, sample means can be computed in linear time. These insights also allow to derive general distributional assumptions such as unimodality under which there is only one local minimizer and hence a unique intrinsic mean. 

	This last result extends the result of \cite{Le98} who guarantees uniqueness for distributions symmetric with respect to a point, being non-increasing functions of the distance to this point, strictly decreasing on a set of positive circular measure, cf. also \cite{KaziskaSrivastava2008}, as well as the very general result of \cite{Afsari10} which in our case of the circle, yields uniqueness if the distribution is restricted to an open half circle.

	In the following Section \ref{Setup:scn} we introduce notation and the concepts of intrinsic means and intrinsic variance. Then we derive in Section \ref{Distr:scn} our first central result for the distribution at the antipode and collect consequences concerning uniqueness and algorithmic methods. In particular, we study the loci of local minimizers and the maximal number of such. Section \ref{CLT:scn} derives various versions of central limit theorems depending on the proximity to the uniform distribution near the antipode. In Section \ref{Simu:scn} we simulate some cases of non-standard asymptotics derived before. We conclude with a discussion of our results and an outlook to large sample asymptotics of data assumed on general manifolds.

\section{Setup}\label{Setup:scn}
	Throughout this paper we consider the unit circle $S^1$ as the interval $[-\pi,\pi)$ with the endpoints identifed. More precisely we equip this interval with the topology generated by the usual topology of $(-\pi,\pi)$ and by all $[-\pi,\epsilon-\pi)\cup (\pi-\epsilon',\pi)$, $0<\epsilon,\epsilon' <\pi$, a base of \emph{neighborhoods of $-\pi$}. On $S^1$ we consider random elements $X_1,\ldots, X_n\stackrel{i.i.d.}{\sim}X$ mapping from a given probability space $(\Omega,\cal A,\mathbb P)$ to $[-\pi,\pi)$, and the \emph{intrinsic distance}
	$ d(\mu,X) := \big|\mu - X -2\pi \nu^{(X)}(\mu)\big|$ from $X$ to a given point $\mu \in S^1$ with
	$$\nu^{(X)}(\mu) := \left\{\begin{array}{rl} 1&\mbox{ if } \mu >0, X \in [-\pi,\mu-\pi)\,,\\-1&\mbox{ if } \mu <0, X \in (\mu+\pi,\pi)\,,\\ 0&\mbox{ else.}\end{array}\right.$$
	In this paper, with the antipodal map $\phi : S^1 \rightarrow S^1$
	$$\phi(x) = \left\{\begin{array}{rl} x-\pi&\mbox{ if }0\leq x<\pi,\\x+\pi&\mbox{ if }-\pi\leq x\leq 0,\end{array}\right.$$
	the \emph{antipodal set}  
	$\widetilde{S} := \{\phi(x): x\in S\}$ of $S\subset S^1$ will play a central role. 

	Letting  $\mathbb E$ denote the usual Euclidean expectation, we  are looking for minimizers $\mu \in [-\pi,\pi)$ of
	$$ V(\mu) := \mathbb E\left(d(\mu,X)^2\right),\mbox{ and } V_n(\mu) := \frac{1}{n}\, \sum_{j=1}^n d(\mu,X_j)^2\,.$$

	\begin{Rm}
	Note that $V$ and $V_n$, being continuous functions on the compact $S^1$, always feature at least one global minimizer.
	\end{Rm}
 
	\begin{Def}
	Every such global minimizer $\mu^*$ of $V$ and $\mu_n$ of $V_n$ is called an \emph{intrinsic population mean} and an \emph{intrinsic sample mean}, respectively. The values $V(\mu^*)$ and $V_n(\mu_n)$ are called \emph{total intrinsic population} and \emph{sample variance}, respectively. 
	\end{Def}

	Obviously, global minimizers need not be unique, and there may be local but non-global minimizers, as the examples below will show.
	
	Denoting the usual average by $\bar{X} = \frac{1}{n}\, \sum_{j=1}^n X_j$ we have that $-\pi \leq \bar{X} < \pi$. Since $V(\mu) \leq \mathbb E \vert \mu - X \vert^2$ with equality for $\mu = 0$, the latter functional being minimized by $\mu = \mathbb E X$, we see that $\mu^* = 0$ locally minimizing $V$ implies $\mathbb E X = 0$; cf. also \citet[Section VIII.9]{KN69} as well as \cite{Ka77}. The classical \emph {Euclidean Central Limit Theorem} hence gives
	\begin{eqnarray}\label{Eucl-CLT:eq}
	 \sqrt{n}\, \bar{X} &\stackrel{D}{\to}& {\cal N}(0,\sigma^2)
	\end{eqnarray}	
	with the Euclidean variance $\var(X) = \sigma^2$. If $\mu^* = 0$ is also a global minimizer of $V$, i.e. an intrinsic mean, then the Euclidean variance agrees with the total intrinsic variance 
	$$V(0) = \sigma^2\,.$$
	If we were interested whether any other $\mu\in S^1$ were an intrinsic mean, we could apply a simple rotation of the circle to reduce this to the case $\mu = 0$ again. Without loss of generality we can hence restrict our attention to this special $\mu$; recall that its antipode is $\phi(0) = -\pi$.

\section{Local and Global Minimizers}
\label{Distr:scn} 

\subsection{The Distribution Near The Antipode}

Here is our first fundamental Theorem.

\begin{Th}
\label{charact:thm}
If $\mu^* = 0$ locally minimizes $V$, then:
\begin{enumerate}[(i)]
\item $\mathbb P_X(-\pi) = 0$, i.e. there is no point mass opposite an intrinsic mean.
\item If additionally there is some $\delta >0$ s.t. $\mathbb P_X$ restricted to $(-\pi, -\pi + \delta)$ features a continuous density $f$ w.r.t. Lebesgue measure, then $f(-\pi+) = \lim_{\mu \downarrow -\pi} f(\mu) \leq \frac{1}{2\pi}$; similarly $f(\pi-) =\lim_{\mu \uparrow \pi} f(\mu) \leq \frac{1}{2\pi}$ for a continuous density $f$ on $(\pi-\delta,\pi)$. If both $f(-\pi+) < \frac{1}{2\pi}$ and $f(\pi-) < \frac{1}{2\pi}$, then $\mu^*$ is locally unique.
\item In case $f(-\pi) = \frac{1}{2\pi}$, $f$ being continuous in a neighborhood of $-\pi$, assume that there is some $\delta>0$ s.t. $f$ is $k$-times continuously differentiable on $[-\pi,-\pi+\delta)$ (i.e. with existing left limits) and $\tilde k$-times continuously differentiable on $(\pi -\delta,\pi]$ (i.e. with existing right limits), where $k$ and $\tilde k$ are chosen minimal s.t. $f^{(k)}(-\pi+) = \lim_{\mu \downarrow -\pi} f^{(k)}(\mu) \neq 0$ as well as $f^{(\tilde k)}(\pi-) = \lim_{\mu \uparrow \pi} f^{(\tilde k)}(\mu) \neq 0$. Then, $f^{(k)}(-\pi+) <0$ as well as $(-1)^{\tilde k}f^{(\tilde k)}(\pi-) <0$, $\mu^*$ is locally unique, and 
for $\delta > 0$ small enough
$$ \mathbb P\{-\pi \leq X \leq - \pi + \delta\} = \frac{\delta}{2 \pi} + \frac{\delta^{k+1}}{(k+1)!} f^{(k)}(-\pi+) + o(\delta^{k+1}), $$
as well as
$$ \mathbb P\{\pi - \delta \leq X < \pi \} = \frac{\delta}{2 \pi} + \frac{\delta^{\tilde k+1}}{(\tilde k+1)!} (-1)^{\tilde k} f^{(\tilde k)}(\pi-) + o(\delta^{\tilde k+1}). $$
\end{enumerate}
\end{Th}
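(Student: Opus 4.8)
The crux of the argument is to compute the increment $V(\mu)-V(0)$ \emph{exactly} for small $\mu$, rather than differentiating under the integral sign, which is precisely what fails at the antipode. For $0<\mu<\pi$ the indicator $\nu^{(X)}(\mu)$ splits the circle into the ``main'' arc $X\in[\mu-\pi,\pi)$, where $d(\mu,X)=|\mu-X|$, and the short arc $X\in[-\pi,\mu-\pi)$ just past the antipode, where $d(\mu,X)=2\pi+X-\mu$. I would therefore write
$$V(\mu)-V(0)=\int_{[\mu-\pi,\pi)}(\mu^2-2\mu x)\,d\mathbb P_X(x)+\int_{[-\pi,\mu-\pi)}\big[(2\pi-\mu)^2+2(2\pi-\mu)x\big]\,d\mathbb P_X(x).$$
Replacing the first integral by the integral over all of $[-\pi,\pi)$ minus the one over the short arc, using $\mathbb E X=0$ (which holds because $0$ locally minimizes $V$, as noted in Section~\ref{Setup:scn}) to eliminate the linear term, and collecting the two integrals over the short arc, the bulk contributions cancel and one is left with the clean identity
$$V(\mu)-V(0)=\mu^2+4\pi\int_{[-\pi,\,\mu-\pi)}(x+\pi-\mu)\,d\mathbb P_X(x),\qquad 0<\mu<\pi.$$
Writing $t=x+\pi\in[0,\mu)$ for the arc length past $-\pi$, the integrand equals $4\pi(t-\mu)$, which is \emph{negative} on $[0,\mu)$; a mirror computation for $\mu<0$ gives the analogue near $\pi$. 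This identity isolates all of the non-smoothness into a single integral over the tiny arc, and the whole theorem reduces to elementary asymptotics of that integral. Deriving this identity cleanly, with careful treatment of the case split and the interval endpoints, is the main obstacle; everything after it is essentially bookkeeping.

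For (i), an atom of mass $c=\mathbb P_X(-\pi)$ sits at $t=0$ and contributes $4\pi(-\mu)c$, while the rest of the short-arc integrand is negative, so $V(\mu)-V(0)\le\mu^2-4\pi c\,\mu<0$ for $0<\mu<4\pi c$, contradicting local minimality unless $c=0$.

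For (ii), with a continuous density $f$ on $(-\pi,-\pi+\delta)$ and no atom, the short-arc integral is $\int_0^\mu(t-\mu)f(-\pi+t)\,dt\to f(-\pi+)\int_0^\mu(t-\mu)\,dt=-\tfrac12 f(-\pi+)\mu^2$ to leading order, so
$$V(\mu)-V(0)=\big(1-2\pi f(-\pi+)\big)\mu^2+o(\mu^2).$$
Local minimality forces the bracket to be $\ge 0$, i.e.\ $f(-\pi+)\le\tfrac1{2\pi}$; the mirror computation gives $f(\pi-)\le\tfrac1{2\pi}$; and if both inequalities are strict the increment is strictly positive on both sides of $0$, giving local uniqueness.

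For (iii), I would substitute the Taylor expansion $f(-\pi+t)=\tfrac1{2\pi}+\tfrac{f^{(k)}(-\pi+)}{k!}t^k+o(t^k)$ (the intermediate derivatives vanishing by minimality of $k$) into the identity. The uniform part again contributes exactly $-\mu^2$, cancelling the leading $\mu^2$, and with $\int_0^\mu(t-\mu)t^k\,dt=-\mu^{k+2}/[(k+1)(k+2)]$ one obtains
$$V(\mu)-V(0)=-\,\frac{4\pi\,f^{(k)}(-\pi+)}{(k+2)!}\,\mu^{k+2}+o(\mu^{k+2}).$$
Local minimality forces $f^{(k)}(-\pi+)\le 0$, hence $<0$ since it is nonzero, and makes the increment strictly positive for small $\mu>0$; the mirror computation yields $(-1)^{\tilde k}f^{(\tilde k)}(\pi-)<0$ and strict positivity on the left, whence local uniqueness. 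The probability expansions follow by integrating the same Taylor expansion termwise, $\mathbb P\{-\pi\le X\le-\pi+\delta\}=\int_0^\delta f(-\pi+t)\,dt=\tfrac{\delta}{2\pi}+\tfrac{f^{(k)}(-\pi+)}{(k+1)!}\delta^{k+1}+o(\delta^{k+1})$, and symmetrically near $\pi$. The only delicate point is that the $o$-terms survive integration against $(t-\mu)$ over $[0,\mu)$, which holds because continuity of $f^{(k)}$ makes the remainder uniform.
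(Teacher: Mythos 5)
Your proposal is correct and takes essentially the same approach as the paper: your exact identity $V(\mu)-V(0)=\mu^2+4\pi\int_{[-\pi,\mu-\pi)}(x+\pi-\mu)\,d\mathbb P_X(x)$ is precisely the paper's starting point, and parts (i)--(iii) then follow from the same sign argument for the atom and the same Taylor expansions of the short-arc integral near the antipode. The only cosmetic difference is that the paper integrates by parts to restate the criterion through the c.d.f.\ $F$, namely $V(\mu)\ge V(0)$ iff $\int_0^\mu\bigl(\tfrac{x}{2\pi}-F(x)\bigr)\,dx\ge 0$, whereas you Taylor-expand the density directly; the resulting expansions, constants and sign conclusions coincide.
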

\begin{proof}
For any $\mu > 0$, we have
\begin{align*}
V(\mu) - V(0) &= \int_{-\pi+\mu}^\pi (\mu - x)^2\,d\mathbb P_X(x) + \int_{-\pi}^{-\pi+\mu} (\mu - x-2\pi)^2\,d\mathbb P_X(x) - \int_{-\pi}^\pi x^2\,d\mathbb P_X(x)
\\&= \int_{-\pi}^\pi \underbrace{\big( (\mu-x)^2 - x^2 \big)}_{=\mu^2 - 2\mu x}\,d\mathbb P_X(x) + \int_{-\pi}^{-\pi+\mu} \big( -4\pi(\mu - x) +4\pi^2 \big)\,d\mathbb P_X(x)
\\&= \mu^2 - 2 \mu \underbrace{\mathbb EX}_{=0} - 4\pi \int_{-\pi}^{-\pi+\mu} \underbrace{( -\pi+ \mu - x)}_{\geq 0}\,d\mathbb P_X(x)
\\&\leq \mu^2 - 4\pi\,\mu\, \mathbb P_X(-\pi)
\end{align*}
which for $\mu$ small enough becomes negative if $\mathbb P_X(-\pi) > 0$, whence (i) follows.

Now denote the (shifted) cumulative distribution function (c.d.f.) of $Y=X+\pi$ by 
\begin{eqnarray}\label{F:def}
F(\mu) &=&\mathbb P\{0 \leq Y \leq \mu\}~=~\mathbb P\{-\pi \leq X \leq -\pi + \mu\}
\end{eqnarray}
for $\mu \geq 0$ to obtain 
\begin{equation*}
\int_{-\pi}^{-\pi+\mu}( -\pi+ \mu - x)\,d\mathbb P_X(x) = \int_0^\mu (\mu - y) F'(y) dy
= 0 F(\mu) - \mu \underbrace{F(0)}_{=0} + \int_0^\mu F(y) dy,
\end{equation*}
$F'$ being understood in a distributional sense.
Noting that $\mu^2 = 4\pi \int_0^\mu \frac{x}{2\pi} dx$ where $\frac{x}{2\pi}$ is the (shifted) c.d.f. of the uniform distribution on $[-\pi,\pi)$, we see that
\begin{equation}
\label{charact:eq}
V(\mu) \geq V(0) \text{ iff } \int_0^\mu \Big(\frac{x}{2\pi} - F(x)\Big) dx \geq 0,
\end{equation}
where equality holds simultaneously, too.

Thus, under the assumptions of (ii), we may use a 2nd order Taylor expansion to obtain for $\mu\geq 0$ small enough
\begin{align*}
0 \leq \int_0^\mu \Big(\frac{x}{2\pi} - F(x)\Big) dx &= \frac{\mu^2}{4\pi} -\frac{\mu^2}{2} F'(0) + o(\mu^2),
\end{align*}
hence $f(-\pi+) = F'(0+) \leq \frac{1}{2\pi}$. If this inequality is strict, $V(\mu) > V(0)$ follows for $\mu$ small enough. With the analogous argument for $\mu < 0$ this yields the assertion in (ii).

Finally, (iii) is obtained by a Taylor expansion as well, namely (for $\mu>0$)
$$ \frac{\mu}{2\pi} - F(\mu) = - \frac{\mu^{k+1}}{(k+1)!} f^{(k)}(-\pi+)  + o(\mu^{k+1}), $$
which after integration, noting that uniqueness implies a sharp inequality, gives
$$ 0 \leq \int_0^\mu \Big(\frac{x}{2\pi} - F(x)\Big) dx = - \frac{\mu^{k+2}}{(k+2)!} f^{(k)}(-\pi+) + o(\mu^{k+2}), $$
whence $f^{(k)}(-\pi+) < 0$ follows as well as local uniqueness. The case $\mu < 0$ is again treated analogously.
\end{proof}

\subsection{Consequences for Uniqueness, Loci of Local Minimizers and Algorithms}

From Theorem \ref{charact:thm} we obtain at once a necessary and sufficient condition such that a minizer of $V$ is locally unique.

\begin{Cor}\label{V-const:Cor}
By \eqref{charact:eq}, $V$ is constant on some interval $S\subset S^1$ iff the probability distribution restricted to the antipodal interval $\widetilde{S}$ 
has constant density $\frac{1}{2\pi}$ there. In particular, suppose that $\mu^* \in S^1$ is a local minimizer of $V$. Then $\mu^*$ is locally unique iff there is no interval $\mu^*\in S\subset S^1$ such that the distribution restricted to $\widetilde{S}$ is uniform. 
\end{Cor}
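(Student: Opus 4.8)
The plan is to turn the entire statement into a question about the sign and the vanishing of the single functional occurring in \eqref{charact:eq}. Fixing attention on a point of $S^1$ and rotating so that it becomes $0$, the computation in the proof of Theorem \ref{charact:thm} yields the exact local identity $V(\mu)-V(0)=4\pi\int_0^\mu(\tfrac{x}{2\pi}-F(x))\,dx$ for $\mu\ge 0$ (with $F$ as in \eqref{F:def}), together with its mirror image for $\mu\le 0$ coming from the right endpoint $\pi$. I would first record that local constancy of $V$ at $0$ already forces $\mathbb E X=0$: differentiating the general expression $V(\mu)-V(0)=\mu^2-2\mu\,\mathbb E X-4\pi\int_0^\mu F$ at $\mu=0$ gives first derivative $-2\,\mathbb E X$, which must vanish, so the identity above is legitimately available (alternatively this is the observation from Section \ref{Setup:scn} that a local minimizer at $0$ has $\mathbb E X=0$). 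Throughout I would use the antipodal bookkeeping $\phi(\mu)=-\pi+\mu$ for small $\mu>0$ and $\phi(\mu)=\pi+\mu$ for small $\mu<0$, so that $F'(\mu)$ is exactly the density at the antipodal point $\phi(\mu)$, converting density statements on $\widetilde S$ into statements about $F'$ on $S$.

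With this in hand the first assertion is the differentiated form of \eqref{charact:eq}. If $V$ is constant on an interval $S$, take any interior point as $0$; then $g(\mu):=\int_0^\mu(\tfrac{x}{2\pi}-F)\,dx$ vanishes identically on a two-sided neighborhood, so $g'(\mu)=\tfrac{\mu}{2\pi}-F(\mu)\equiv 0$, i.e. $F(\mu)=\tfrac{\mu}{2\pi}$ and hence the density $F'\equiv\tfrac{1}{2\pi}$ on the corresponding portion of $\widetilde S$. Conversely, if the density equals $\tfrac{1}{2\pi}$ on $\widetilde S$ then $F(\mu)=\tfrac{\mu}{2\pi}$ on $S$, so $g\equiv 0$ and, by the simultaneous-equality clause of \eqref{charact:eq}, $V\equiv V(0)$ there. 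This direction is routine.

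For the ``in particular'' statement one direction is immediate from the first assertion: if some interval $S\ni\mu^*$ has uniform antipodal density, then $V$ is constant on $S$, so every neighborhood of $\mu^*$ contains points $s\neq\mu^*$ with $V(s)=V(\mu^*)$, whence $\mu^*$ is not locally unique. The substantive direction is the converse. Assuming $\mu^*=0$ is not locally unique, there are minimizers $\mu_j\to 0$, $\mu_j\neq 0$, with $V(\mu_j)=V(0)$; by \eqref{charact:eq} this means $g\ge 0$ near $0$ while $g(\mu_j)=0$, so at each $\mu_j$ one gets $g'(\mu_j)=0$, i.e. $F(\mu_j)=\tfrac{\mu_j}{2\pi}$. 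The goal is to upgrade this from a sequence of touching points to an entire interval on which $F(\mu)=\tfrac{\mu}{2\pi}$.

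The main obstacle is precisely this upgrade: a nonnegative $g$ may a priori touch zero along a sequence accumulating at $0$ without vanishing on any interval, so the zero set of $V-V(\mu^*)$ need not be an arc. Ruling this out is where I expect the real work to lie, and the crucial structural input is the monotonicity of the c.d.f. $F$ in \eqref{F:def}, equivalently nonnegativity of the density, which restricts how $\tfrac{\mu}{2\pi}-F$ can oscillate. The cleanest way to close the gap is to feed in the limiting behavior of the density at the antipode from Theorem \ref{charact:thm}: once $f(-\pi+)$ exists, part (ii) forces $f(-\pi+)=\tfrac{1}{2\pi}$ (a strict inequality would already give strict local uniqueness), and then part (iii) shows that if the density is not identically $\tfrac{1}{2\pi}$ on a one-sided neighborhood, its first nonvanishing derivative produces a definite sign of $\tfrac{\mu}{2\pi}-F$, again contradicting the accumulation of zeros; the remaining case, $F(\mu)=\tfrac{\mu}{2\pi}$ on an interval, is exactly a genuine arc of antipodal uniformity. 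I would therefore organize the converse around Theorem \ref{charact:thm}(ii)--(iii), flagging that the existence of the one-sided density limits at the antipode is what makes the dichotomy clean and that the genuinely oscillatory (infinitely flat) densities are the delicate residual case to be argued away via the monotonicity of $F$.
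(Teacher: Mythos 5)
Your handling of the first sentence and of the implication ``uniform antipodal density $\Rightarrow$ not locally unique'' is correct, and your observation that local constancy of $V$ forces $\mathbb E X=0$ is a point of care the paper omits (note it is equally needed in your ``routine'' converse: without a critical point in $S$, uniform density on $\widetilde S$ only makes $V$ affine with slope $-2\,\mathbb E X$ there, not constant). The genuine gap is the direction you yourself flag as open, ``not locally unique $\Rightarrow$ uniform on an interval''. Under your reading of local uniqueness --- no equal-value minimizers accumulating at $\mu^*$ --- this implication is not delicate but \emph{false}, so neither the monotonicity of $F$ nor Theorem~\ref{charact:thm}(ii)--(iii) can close it. Take $\mu_j=2^{-j}\epsilon$, $L_j=\mu_j-\mu_{j+1}$, and let $Y=X+\pi$ have density
\begin{equation*}
p(y)=\tfrac{1}{2\pi}\Bigl(1-\cos\bigl(4\pi(y-\mu_{j+1})/L_j\bigr)\Bigr)\quad\text{for } y\in[\mu_{j+1},\mu_j],\ j\ge 0,
\end{equation*}
mirrored on the other side of $-\pi$, with the remaining mass a point mass at $0$, so that $\mathbb E X=0$ by symmetry. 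Each block $[\mu_{j+1},\mu_j]$ carries mass $L_j/(2\pi)$, hence $F(\mu_j)=\mu_j/(2\pi)$, and on each block one computes $\int_0^{\mu}\bigl(\tfrac{x}{2\pi}-F(x)\bigr)\,dx=\tfrac{L_j^2}{32\pi^3}\bigl(1-\cos\bigl(4\pi(\mu-\mu_{j+1})/L_j\bigr)\bigr)\ge 0$, vanishing at the block endpoints and midpoints. By \eqref{charact:eq}, $\mu^*=0$ is a local minimizer and minimizers with $V(\mu)=V(0)$ accumulate at $0$ from both sides, yet $p$ equals $\tfrac{1}{2\pi}$ only at isolated points, so no interval $S\ni 0$ has uniform antipodal density. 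Your fallback, Theorem~\ref{charact:thm}(ii)--(iii), is inapplicable here because it presupposes one-sided limits and derivatives of the density at the antipode, which $p$ does not possess; inserting damping factors $a_j\downarrow 0$ in front of the cosine even makes $p$ continuous (indeed $C^\infty$ and infinitely flat) at the antipode with value $\tfrac{1}{2\pi}$, still with no uniform interval.

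This also explains why the paper can claim the corollary ``at once'': there, ``locally unique'' must be read as ``$V$ is not constant on any interval containing $\mu^*$'', i.e.\ $\mu^*$ is not part of a continuum of minimizers. With that reading the second sentence is a literal restatement of the first (negate both sides of the equivalence, with $\mathbb E X=0$ supplied by the local-minimizer hypothesis), and no upgrade from touching points to intervals is needed --- which is exactly the step your plan cannot supply. Isolation of $\mu^*$ in your stronger sense genuinely requires regularity at the antipode; this is precisely why Corollary~\ref{piece:Cor} assumes a piecewise analytic density and why Theorem~\ref{charact:thm}(iii) carries differentiability hypotheses. So either adopt the paper's weaker reading, in which case your proof should end after your second paragraph, or state your version as a separate claim under such regularity assumptions; the general argument your final paragraph promises does not exist.
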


The following is a generalization of a result by Rabi Bhattacharya (personal communication from 2008):

\begin{Cor}\label{piece:Cor}
If the distribution of $X$ has a density w.r.t. Lebesgue measure which is composed of finitely many pieces, each being analytic up to the interval boundaries, then any local minimizer of $V$ is locally unique unless the density is constant $\frac{1}{2\pi}$ on some interval.
\end{Cor}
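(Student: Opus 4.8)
The plan is to combine the sign criterion \eqref{charact:eq} with the identity theorem for real-analytic functions, using the piecewise-analytic hypothesis precisely to upgrade ``a sequence of equality points accumulating at $\mu^*$'' into ``$V$ is constant on a whole interval.'' After a rotation I may assume the local minimizer under consideration is $\mu^* = 0$, so that its antipode is $-\pi$, and I introduce $G(\mu) := \int_0^\mu \big(\tfrac{x}{2\pi} - F(x)\big)\,dx$ with $F$ as in \eqref{F:def}. By \eqref{charact:eq} (and its analogue for $\mu<0$) the sign of $V(\mu)-V(0)$ equals that of $G(\mu)$, the two vanishing simultaneously.

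Next I translate the hypotheses into statements about $G$. Since $0$ is a local minimizer, $G(\mu)\ge 0$ on a neighborhood of $0$. Suppose, toward the ``unless'' clause, that $0$ is \emph{not} locally unique, i.e. not a strict local minimizer; then there are points $\mu\neq 0$ arbitrarily close to $0$ with $V(\mu)=V(0)$, hence with $G(\mu)=0$. At least one side, say the right, then carries a sequence $\mu_k\downarrow 0$ of zeros of $G$. On the other hand, the density being a finite concatenation of pieces analytic up to their boundaries, I can choose $\delta>0$ so small that on $[-\pi,-\pi+\delta)$ the density coincides with a single function analytic up to the left endpoint $-\pi$; this is where ``analytic up to the interval boundaries'' is used, guaranteeing a genuine one-sided analytic germ at the antipode. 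Then $F(\mu)=\int_0^\mu f(-\pi+t)\,dt$, and hence $G$, is real-analytic on $[0,\delta)$.

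Now comes the key step: a function real-analytic on $[0,\delta)$ whose zero set has $0$ as an accumulation point must vanish identically, since all one-sided Taylor coefficients at $0$ are forced to be zero. Thus $G\equiv 0$ on some $[0,\delta')$, so $G'(\mu)=\tfrac{\mu}{2\pi}-F(\mu)\equiv 0$ there, whence $F(\mu)=\tfrac{\mu}{2\pi}$ and, differentiating, $f(-\pi+\mu)=\tfrac{1}{2\pi}$ for $0\le \mu<\delta'$. Equivalently, the density equals $\tfrac{1}{2\pi}$ on the interval $[-\pi,-\pi+\delta')$, which is exactly the flat-$V$ situation of Corollary \ref{V-const:Cor}. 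This establishes the contrapositive: if the density is nowhere constant $\tfrac{1}{2\pi}$ on an interval, no such accumulation can occur on either side, so every local minimizer is strict, i.e. locally unique.

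I expect the only genuine obstacle to be the bookkeeping at the antipode in the previous paragraph: one must verify that finitely many pieces, each analytic \emph{up to the boundary}, really do furnish a one-sided analytic representative of the density on a half-open interval abutting $-\pi$ from the right (and symmetrically from the left), so that the real-analytic identity theorem is applicable at the boundary point rather than only at interior points. Everything else---the reduction to $\mu^*=0$, the sign dictionary via \eqref{charact:eq}, and the final differentiation---is routine.
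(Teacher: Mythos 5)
Your proof is correct, and it reaches the conclusion by a route different from the paper's. The paper's proof is a one-liner riding on Theorem~\ref{charact:thm}: by analyticity of the (one-sided) density piece abutting the antipode, either the density value or one of its one-sided derivatives deviates from the constant $\frac{1}{2\pi}$ at $-\pi$ (respectively at $\pi$) --- in which case part (ii) or part (iii) of that theorem already yields local uniqueness --- or else all one-sided derivatives agree with those of $\frac{1}{2\pi}$, and analyticity forces the density to equal $\frac{1}{2\pi}$ on a whole interval, the excluded case. You instead argue by contraposition and apply the identity theorem not to the density $f$ but to the integrated criterion $H(\mu)=\int_0^\mu\bigl(\tfrac{x}{2\pi}-F(x)\bigr)\,dx$ (your $G$; note the paper reserves $G$ for the integrand in Proposition~\ref{unique-minimizer:prop}): a non-strict local minimizer forces zeros of $H$ to accumulate at $0$ on one side, one-sided analyticity of $H$ at $0$ then gives $H\equiv 0$ there, and two differentiations yield $f\equiv\frac{1}{2\pi}$ near the antipode. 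Both arguments rest on the same two pillars, namely \eqref{charact:eq} and analyticity up to the boundary of the piece at the antipode, but your key lemma (an analytic function whose zeros accumulate at a point of its domain vanishes identically) replaces the paper's reliance on the Taylor-expansion case analysis of Theorem~\ref{charact:thm}(ii)/(iii). The paper's route is shorter given that theorem and retains the quantitative order-$k$ information that later drives the convergence rates in Theorem~\ref{CLT:thm}; yours is self-contained modulo \eqref{charact:eq}, avoids any discussion of the sign of the first non-vanishing derivative, and makes explicit exactly where ``analytic up to the interval boundaries'' is needed, namely for the one-sided identity theorem at the endpoint. Two minor points you leave implicit, as does the paper: \eqref{charact:eq} presupposes $\mathbb{E}X=0$, which indeed holds at any local minimizer after the rotation; and ``locally unique'' is read, consistently with the paper's own proofs, as ``strict local minimizer,'' which is what makes your accumulating equal-value points available.
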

\begin{proof}
This follows immediately since an analytic function is constant unless one of its derivatives is non-zero.
\end{proof}

\begin{Prop}\label{unique-minimizer:prop}
Consider the distribution of $X$, decomposed into the part $\lambda$ which is absolutely continuous w.r.t. Lebesgue measure, with density $f$, and the part $\eta$ singular to Lebesgue measure. Let $S_1,\ldots,S_k$ be the distinct open arcs on which $f<\frac{1}{2\pi}$, assume they are all disjoint from $\supp \eta$, and that $\{x \in S^1\, :\, f(x) = \frac{1}{2\pi}\}$ is a Lebesgue null-set. Then $X$ has at most $k$ intrinsic means and every $\widetilde{S_j} $ contains at most one candidate.
\end{Prop}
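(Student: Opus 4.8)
The plan is to establish two facts and combine them: first, that $V$ is strictly convex on each antipodal arc $\widetilde{S_j}$, so that $\widetilde{S_j}$ harbours at most one local minimizer; and second, that the antipode of every intrinsic (indeed every local) minimizer must lie in $\bigcup_j S_j$, so that every minimizer lies in $\bigcup_j \widetilde{S_j}$. Since the $\widetilde{S_j}$ are pairwise disjoint (being the $\phi$-images of the disjoint components $S_j$) and each contains at most one candidate, this yields at most $k$ intrinsic means.

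First I would record a swept-mass formula for the derivative of $V$. Away from antipodes of atoms one has $V'(\mu)=2\,\mathbb{E}[\mathrm{sd}(\mu,X)]$, where $\mathrm{sd}(\mu,x)\in(-\pi,\pi)$ denotes the signed intrinsic distance; as $\mu$ increases the antipode $\phi(\mu)$ sweeps across the circle and each increment of antipodal mass lowers $\mathbb{E}[\mathrm{sd}(\mu,X)]$ by $2\pi$ times that mass (the signed distance of a swept point jumps from $+\pi$ to $-\pi$). Concretely, for $\nu<\mu$ in arc order, $V'(\mu)-V'(\nu)=2(\mu-\nu)-4\pi\,\mathbb{P}\{X\in(\phi(\nu),\phi(\mu)]\}$, equivalently $V''(\mu)=2\big(1-2\pi f(\phi(\mu))\big)$ wherever the swept mass is absolutely continuous. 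On $\widetilde{S_j}$ the antipode ranges over $S_j$, which is disjoint from $\supp\eta$ and carries density $f<\tfrac1{2\pi}$; hence the swept mass is strictly less than $\tfrac1{2\pi}$ of the arc length, so $V'$ is strictly increasing and $V$ strictly convex on $\widetilde{S_j}$. A strictly convex function has at most one critical point, hence at most one local minimizer, which gives the claim that $\widetilde{S_j}$ contains at most one candidate.

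For the placement of minimizers I would show that no local minimizer has its antipode in the complement $B=S^1\setminus\bigcup_j S_j$, where $f\ge\tfrac1{2\pi}$ and $\supp\eta\subseteq B$. The same swept-mass computation shows that on the interior of each component of $\widetilde{B}=\phi(B)$ the density part contributes $V''\le 0$ while any singular mass only decreases $V'$ further, so $V'$ is non-increasing and $V$ concave there; a concave function has no interior local minimizer. Combined with Theorem \ref{charact:thm} — which forbids an atom opposite a minimizer in part (i) and forces the one-sided limits of $f$ at the antipode to be $\le\tfrac1{2\pi}$ in part (ii) — and with the hypothesis that $\{f=\tfrac1{2\pi}\}$ is a null set, this pins the antipode of any minimizer into the open set $\bigcup_j S_j$, i.e. the minimizer into $\bigcup_j\widetilde{S_j}$. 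As these $k$ disjoint arcs each harbour at most one minimizer, there are at most $k$ intrinsic means.

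The main obstacle will be the careful treatment of the exceptional sets, and it is precisely for this that the two technical hypotheses are imposed. Theorem \ref{charact:thm}(i) excludes only point masses at the antipode, so ruling out the entire singular support $\supp\eta$ (including a possible singular-continuous part) relies on the concavity argument above together with the disjointness $S_j\cap\supp\eta=\emptyset$. Likewise the boundary case $f(\phi(\mu^*))=\tfrac1{2\pi}$ and the endpoints $\partial\widetilde{S_j}$ need attention: by Corollary \ref{V-const:Cor} equality of $f$ with $\tfrac1{2\pi}$ on a whole interval would make $V$ constant and destroy local uniqueness, which is why $\{f=\tfrac1{2\pi}\}$ is assumed null, and a short check of the one-sided derivatives $V'(p^-)\le 0\le V'(p^+)$ at a shared endpoint $p$ shows that an interior minimizer of $\widetilde{S_j}$ and an endpoint minimizer cannot coexist, so each arc still contributes at most one candidate.
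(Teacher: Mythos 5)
Your proposal is correct in substance, and at its core it is the same argument as the paper's, repackaged: your identity $V''(\mu)=2\bigl(1-2\pi f(\phi(\mu))\bigr)$ is precisely the paper's observation that $G(\mu)=\frac{\mu}{2\pi}-F(\mu)$, with $F$ from \eqref{F:def}, has derivative $\frac{1}{2\pi}-f(\phi(\cdot))$ (note $G=V'/(4\pi)$ once one centers at a local minimizer, so monotonicity of $G$ \emph{is} convexity of $V$). The difference is organizational. The paper anchors at a hypothesized local minimizer $\mu^*=0$ and works with the integrated criterion \eqref{charact:eq}: it shows $H(\mu)=\int_0^\mu G>0$ on the whole adjacent sub-uniform stretch \emph{including its endpoint} (so no competing minimizer there, boundary point included), and then kills the following super-uniform arc $\widetilde{T_1}$ by the Corollary~\ref{V-const:Cor}/Theorem~\ref{charact:thm} argument; because it compares values of $V$ directly, no separate endpoint discussion is needed. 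You instead prove strict convexity of $V$ on each $\widetilde{S_j}$ and concavity on the antipodal complement $\widetilde{B}$, which is arguably cleaner and more global --- it gives the locus statement for all arcs simultaneously rather than relative to one anchored minimizer --- but you then must pay, as you recognize, with the case analysis at the shared endpoints, which the paper's value comparison absorbs for free.

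Two steps need tightening. First, ``a concave function has no interior local minimizer'' is false as stated: a constant function is concave, and this is exactly the degenerate situation of Corollary~\ref{V-const:Cor}. What you need is \emph{strict} concavity on the components of $\widetilde{B}$, and you do have it --- the null-set hypothesis gives $f>\frac{1}{2\pi}$ a.e.\ on $B$, so the swept mass over any subinterval strictly exceeds the uniform mass and $V'$ is strictly decreasing there --- but that hypothesis must be invoked \emph{inside} this step, not after it as your write-up suggests. Second, you cannot appeal to Theorem~\ref{charact:thm}(ii) here, since Proposition~\ref{unique-minimizer:prop} assumes no continuity of $f$ near the antipode of a minimizer; fortunately that appeal is redundant once strict concavity is in place, and part (i) of Theorem~\ref{charact:thm} (or your own observation that an atom at $\phi(p)$ forces $V'(p^+)<V'(p^-)$, making $V'(p^-)\le 0\le V'(p^+)$ impossible) is all you need to handle singular mass sitting on $\partial S_j$.
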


\begin{proof}
Suppose that $\mu^* = 0$ is a local minimizer of $V$. By hypothesis and virtue of Corollary~\ref{V-const:Cor} there is an open arc $S_1\ni-\pi$ followed (going into positive $x$-direction) by a closed  arc $T_1$ such that $f(x) <1/(2\pi)$ for $x \in S_1\setminus\{-\pi\}$ and $f(x) > 1/(2\pi)$ for $x\in T_1$ a.e.. Let $\widetilde{S_1}= (-\delta',\delta)$ and $\widetilde{T_1}=[\delta,\delta'']$ for some $0<\delta',\delta$ and $\delta''>\delta$. It suffices to show that no $0<\mu <\delta''$ can be another minimizer of $V$.

To this end with $F$ from (\ref{F:def}) consider
\begin{equation*}
G(\mu) = \frac{\mu}{2\pi} - F(\mu) \text{\quad and\quad } H(\mu) = \int_0^\mu G(x) dx = \int_0^\mu \left(\frac{x}{2\pi} - F(x)\right) dx\, ,
\end{equation*}
cf. \eqref{charact:eq} and recall that $\mu$ is a minimizer of $V$ iff $H(\mu)\geq 0$. By construction, $G'$ is positive on $(0,\delta)$ and it is negative on $[\delta,\delta']$ a.e. Hence $G$ is continuous and $G(0)=0<G(\mu)$ for small $\mu >0$. Hence, $H(\mu)>0$ for all $0<\mu\leq \delta$. Arguing again with Corollary~\ref{V-const:Cor} that there cannot be a minizer of $V$ if its antipode carries more density than the uniform density, gives that $\delta <\mu <\delta''$ cannot be a minimizer either, completing the proof.\end{proof}

We note two straightforward consequences.

\begin{Cor}~\label{candidates:Cor}
\begin{enumerate}[(i)]
 \item Every population mean of a unimodal distribution is globally unique.
\item
If the distribution of $X$ is composed of $k < \infty$ point masses at distinct locations, then $V$ has at most $k$ local minimizers, each being locally unique; for each interval formed by two neighboring point masses there is at most one local minimizer in the interior of the antipodal interval.
\item
In particular, any intrinsic sample mean is locally unique.
\end{enumerate}
\end{Cor}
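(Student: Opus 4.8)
The plan is to derive all three parts by specializing Proposition~\ref{unique-minimizer:prop}; the only substantive work is to identify, in each case, the low-density arcs $S_1,\dots,S_k$ and to check the proposition's hypotheses (disjointness of the $S_j$ from $\supp\eta$ and nullity of $\{f=\tfrac1{2\pi}\}$).

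For (i) I would first record that a non-uniform unimodal density $f$ has a single mode $m$ and is monotone non-increasing as one moves away from $m$ towards its antipode $\phi(m)$ along either arc, so that $f$ attains its minimum at $\phi(m)$. Since $f$ integrates to $1$ against total length $2\pi$, it cannot stay $\geq\tfrac1{2\pi}$ everywhere (that would force the uniform distribution), so by monotonicity it crosses the level $\tfrac1{2\pi}$ exactly once on each side of $m$. Hence $\{f<\tfrac1{2\pi}\}$ is a single open arc $S_1$ around $\phi(m)$ and $\{f=\tfrac1{2\pi}\}$ is a null set, so $k=1$ in the proposition and there is at most one intrinsic mean. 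Existence of a global minimizer (noted in the Remark after the definition of $V$) then upgrades this to global uniqueness.

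For (ii) I would take the purely atomic law, so the absolutely continuous part vanishes and its density is $f\equiv 0<\tfrac1{2\pi}$ everywhere; requiring the low-density arcs to avoid $\supp\eta=\{\text{atoms}\}$, the arcs $S_1,\dots,S_k$ are precisely the $k$ open gaps between consecutive point masses, disjoint from $\supp\eta$ by construction, while $\{f=\tfrac1{2\pi}\}=\varnothing$ is null. The argument of Proposition~\ref{unique-minimizer:prop}, which applies to any local minimizer, then yields at most one local minimizer per antipodal arc $\widetilde{S_j}$ together with local uniqueness of each. To match the stated phrasing I would invoke Theorem~\ref{charact:thm}(i): a local minimizer carries no point mass at its antipode, so that antipode lies in the interior of some gap $S_j$, i.e. the minimizer lies in $\widetilde{S_j}$; the per-arc bound then caps the number of local minimizers at $k$ and gives at most one in the interior of each antipodal interval.

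Part (iii) is then immediate: the empirical measure of $X_1,\dots,X_n$ is a finite sum of point masses, hence an atomic law covered by (ii), and any intrinsic sample mean is a global and therefore a local minimizer of $V_n$, hence locally unique. The one point requiring care, rather than a genuine obstacle, is the single-arc claim in (i): one must use the monotone fall-off of $f$ away from the mode to exclude several disjoint low-density arcs, and tacitly exclude the degenerate situation where $f$ equals the uniform level $\tfrac1{2\pi}$ on a set of positive measure (in particular the uniform distribution itself), for which the null-set hypothesis fails and uniqueness indeed breaks down.
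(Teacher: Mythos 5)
Your route is the paper's own: the paper offers no separate proof of this corollary, presenting it as a direct specialization of Proposition~\ref{unique-minimizer:prop} combined with Theorem~\ref{charact:thm}(i). Your treatment of (ii) and (iii) --- the gaps between atoms as the low-density arcs, Theorem~\ref{charact:thm}(i) forcing the antipode of any local minimizer into the interior of a gap, the per-arc bound from the proposition's argument, and the empirical measure as a finite atomic law --- is exactly this intended argument and is correct.

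The gap is in (i). Unimodality does \emph{not} imply the null-set hypothesis of Proposition~\ref{unique-minimizer:prop}: a density that is non-increasing away from its mode may sit at exactly the level $\frac{1}{2\pi}$ on intervals of positive length, and then the proposition is simply inapplicable. You notice this, but you dispose of it by asserting that in such degenerate cases ``uniqueness indeed breaks down'', and that assertion is false. For $0<h<\frac{1}{2\pi}$ consider the continuous density with unique mode at $0$,
$$
f(x)=\begin{cases}
\frac{1}{2\pi}+h\bigl(1-\tfrac{4|x|}{\pi}\bigr), & 0\le |x|\le \tfrac{\pi}{4},\\[2pt]
\frac{1}{2\pi}, & \tfrac{\pi}{4}\le |x|\le \tfrac{3\pi}{4},\\[2pt]
\frac{1}{2\pi}-\tfrac{4h}{\pi}\bigl(|x|-\tfrac{3\pi}{4}\bigr), & \tfrac{3\pi}{4}\le |x|\le \pi.
\end{cases}
$$
It is unimodal and $\{f=\tfrac{1}{2\pi}\}$ has Lebesgue measure $\pi$, yet the intrinsic mean is unique: with $F$ as in \eqref{F:def} one computes $\frac{y}{2\pi}-F(y)=hy\bigl(1-\tfrac{2y}{\pi}\bigr)>0$ near $0$, $=\tfrac{h\pi}{8}$ on the flat stretch, decreasing to $0$ at $y=\pi$, and (by symmetry of $f$) negative on $(\pi,2\pi)$; hence by \eqref{charact:eq}, $V(\mu)>V(0)$ for every $\mu\neq 0$. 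So the class of unimodal laws with positive-measure level set at $\frac{1}{2\pi}$ is covered by the corollary's claim --- only the uniform distribution itself, which any sensible reading of ``unimodal'' excludes, actually loses uniqueness --- but it is not covered by your proof. To close the gap one must argue from \eqref{charact:eq} and Corollary~\ref{V-const:Cor} rather than from the proposition's statement: since $f\circ\phi\le\frac{1}{2\pi}$ on one closed arc and $f\circ\phi>\frac{1}{2\pi}$ on the complementary open arc, $V$ is convex on the former and admits no local minimizer in the interior of the latter, so the set of intrinsic means is an interval; were this interval nondegenerate, Corollary~\ref{V-const:Cor} would force $f\equiv\frac{1}{2\pi}$ on its antipodal interval, and a short case analysis using monotonicity away from the mode, $\int f=1$, and minimality at the interval's endpoints then forces $f$ to be uniform, a contradiction. (To be fair, the paper's own one-line derivation glosses over exactly the same point.)
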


Curiously, candiates for intrinsic sample means are very easy to obtain from one another.
\begin{Cor}\label{sample:cor} For a sample $X_1,\ldots,X_n$ the candidates for minimizers of $V_n$ described in Proposition~\ref{unique-minimizer:prop} and Corollary~\ref{candidates:Cor} form the vertices of a regular $n$-polgyon. If $(X_1,\ldots,X_n)$ is continuously distributed on $(S^1)^n$, then there is a.s. one and only one intrinsic sample mean.\end{Cor}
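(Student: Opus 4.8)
The corollary makes two claims, and I would prove them in turn.

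\emph{The candidates lie at the vertices of a regular $n$-gon.} The empirical distribution is a sum of at most $n$ point masses, so by Corollary~\ref{candidates:Cor}(iii) every candidate is a locally unique minimizer of $V_n$, and by Theorem~\ref{charact:thm}(i) no candidate $\mu^*$ can lie opposite a data point, i.e.\ $\mu^*\neq\phi(X_j)$ for every $j$. Since $V_n$ fails to be differentiable exactly at the antipodes $\phi(X_j)$ of the data, every candidate is a point of differentiability, hence a critical point. I would then compute $V_n'$: away from the $\phi(X_j)$ the integer $\nu^{(X_j)}(\mu)$ is locally constant, so each summand $\big(\mu-X_j-2\pi\nu^{(X_j)}(\mu)\big)^2$ is smooth and
\[
V_n'(\mu)=2(\mu-\bar X)-\frac{4\pi}{n}\,N(\mu),\qquad N(\mu):=\sum_{j=1}^n\nu^{(X_j)}(\mu)\in\mathbb Z.
\]
The equation $V_n'(\mu)=0$ then forces $\mu=\bar X+\tfrac{2\pi}{n}N(\mu)$ with $N(\mu)\in\mathbb Z$, so every critical point — and thus every candidate — lies in $\{\bar X+\tfrac{2\pi m}{n}\bmod 2\pi:\ m=0,\dots,n-1\}$, the vertex set of a regular $n$-gon with one vertex at $\bar X$.

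\emph{Almost sure uniqueness.} Write $\mu_m:=\bar X+\tfrac{2\pi m}{n}$. It suffices to show that for continuously distributed data the values $V_n(\mu_0),\dots,V_n(\mu_{n-1})$ are almost surely pairwise distinct, for then the global minimum is attained at a single vertex. Fixing $m_1\neq m_2$, I would expand $V_n(\mu_m)=\frac1n\sum_j\big((\bar X-X_j)+c_{jm}\big)^2$ with $c_{jm}:=\tfrac{2\pi m}{n}-2\pi\nu^{(X_j)}(\mu_m)$; since $\sum_j(\bar X-X_j)=0$, the term $\frac1n\sum_j(\bar X-X_j)^2$ is independent of $m$ and cancels in the difference. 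Hence $V_n(\mu_{m_1})-V_n(\mu_{m_2})$ is an \emph{affine} function of $(X_1,\dots,X_n)$ on any region of $(S^1)^n$ where the wrapping integers $\nu^{(X_j)}(\mu_{m_1})$ and $\nu^{(X_j)}(\mu_{m_2})$ stay constant.

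Partitioning $(S^1)^n$ by the cyclic order of the data together with these finitely many wrapping patterns produces finitely many open cells covering $(S^1)^n$ up to a null boundary, and on each cell the tie set $\{V_n(\mu_{m_1})=V_n(\mu_{m_2})\}$ is the zero set of an affine map, hence either a measure-zero hyperplane slice or the entire cell. Ruling out the latter degenerate case is the crux: the coefficient of $X_i$ works out to $\tfrac{4\pi}{n}\big[(\nu^{(X_i)}(\mu_{m_1})-\nu^{(X_i)}(\mu_{m_2}))-\tfrac1n(N(\mu_{m_1})-N(\mu_{m_2}))\big]$, so the map collapses to a constant only when all these wrapping differences coincide, and one must then check that the resulting constant is nonzero — which I would establish by exhibiting, in the closure of such a cell, a configuration strictly separating the two candidate values and invoking the global continuity of $V_n$. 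Summing the resulting null sets over the finitely many cells and finitely many pairs $(m_1,m_2)$ shows the tie event is Lebesgue-null; on its complement the intrinsic sample mean is unique. (The differentiability used above holds off the further null set where two data points coincide or a datum is antipodal to a vertex, which continuity of the distribution excludes almost surely.)
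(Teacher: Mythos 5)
Your first claim is handled correctly, and essentially by the paper's own route: solving $V_n'(\mu)=0$ away from the antipodes of the data is the derivative form of the paper's explicit minimization in \eqref{Vpos:eq} and \eqref{Vneg:eq}, and both identify the candidate set $\{\bar X+\tfrac{2\pi m}{n} \bmod 2\pi\}$. The gap is in the uniqueness part, and it is not merely the unfinished verification you flag at the end: the statement you reduce to --- that the $n$ vertex values $V_n(\mu_0),\dots,V_n(\mu_{n-1})$ are a.s.\ pairwise distinct --- is \emph{false}. Your degenerate cells (all wrapping differences $\nu^{(X_j)}(\mu_{m_1})-\nu^{(X_j)}(\mu_{m_2})$ equal to one integer $d$) do occur with positive probability, and on some of them the constant value of $V_n(\mu_{m_1})-V_n(\mu_{m_2})$ is exactly zero, so no configuration in the closure of the cell separates the two values and your proposed repair cannot be carried out. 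Concretely, take $n=3$, $m_1=1$, $m_2=0$, and the open set where $X_1=-\pi+\epsilon_1$, $X_2=-\pi+\epsilon_2$, $X_3=\pi-\epsilon_3$ with $0<\epsilon_1,\epsilon_2,\epsilon_3<\pi/4$. There $\mu_0=\bar X$, $\mu_1=\bar X+\tfrac{2\pi}{3}$, and the intrinsic distances from $\mu_0$ to $X_1,X_2,X_3$ are $\tfrac{2\pi-a_j}{3}$ while those from $\mu_1$ are $\tfrac{2\pi+a_j}{3}$, where $a_1=2\epsilon_1-\epsilon_2+\epsilon_3$, $a_2=-\epsilon_1+2\epsilon_2+\epsilon_3$, $a_3=-(\epsilon_1+\epsilon_2+2\epsilon_3)$; hence
\begin{equation*}
V_3(\mu_0)-V_3(\mu_1)=\frac{1}{27}\sum_{j=1}^3\Big((2\pi-a_j)^2-(2\pi+a_j)^2\Big)=-\frac{8\pi}{27}\sum_{j=1}^3 a_j=0
\end{equation*}
identically on this set of positive measure, since $a_1+a_2+a_3=0$ (this is just $\sum_j(\bar X-X_j)=0$).

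This does not contradict the corollary, because on that cell neither $\mu_0$ nor $\mu_1$ is a critical point of $V_3$ (the intrinsic mean is the third vertex, near $\pm\pi$); and that observation is exactly the missing ingredient. You only need to exclude ties among vertices that are \emph{local minimizers}, and for those your degeneracy cannot arise: if $\mu_m$ is a local minimizer, then $V_n'(\mu_m)=0$ forces $N(\mu_m)\equiv m \pmod n$, so for two local minimizers with $m_1\neq m_2$ in $\{0,\dots,n-1\}$ the degeneracy relation $N(\mu_{m_1})-N(\mu_{m_2})=nd$ would give $m_1\equiv m_2 \pmod n$, a contradiction. Hence for such pairs the linear part of your affine map is nonzero, the tie set is a finite union of lower-dimensional slices, and the argument closes. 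This restriction is precisely how the paper proceeds: its formula \eqref{polygon:eq} for $v_{n,i}$ is derived only for local minimizers (as Remark~\ref{num:rm} emphasizes, $V_n(\mu^{(i)})=v_{n,i}$ only then), and ties between those values are excluded via a nontrivial integer affine relation among $\bar X_i$, $\bar X_j$, $\bar X$ and $\pi$ --- the same hyperplane argument you use, but applied where it actually works.
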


\begin{proof} W.l.o.g. assume that the sample is ordered, i.e. $-\pi \leq X_1 \leq\ldots \leq X_n < \pi$. 
We consider for $\mu\geq 0$ the case that $X_i < \mu - \pi < X_{i+1}$ for $1 \leq i \leq n -1$ or $\mu - \pi < X_1$ for $i = 0$. Note that equalities $X_i = \mu - \pi$ etc. are excluded by Theorem~\ref{charact:thm}(i); also, observe that $\mu-\pi >X_n$ for $i=n$ cannot lead to a local minimum at $\mu$. Then, setting the first sum to zero in case $i=0$,
\begin{align}
 V_n(\mu) &= \frac{1}{n}\,\left(\sum_{j=1}^i(X_j - \mu +2\pi)^2 + \sum_{j=i+1}^n(X_j - \mu)^2\right)\notag\\
	&= \frac{1}{n}\,\sum_{j=1}^n(X_j - \mu)^2 - \frac{4\pi}{n} \sum_{j=1}^i(\mu-X_j - \pi). \label{Vpos:eq}
\end{align}
This is minimal for $\mu^{(i)} = \bar{X} + \frac{2\pi i}{n}$. However, only if $X_i +\pi < \mu^{(i)} < X_{i+1}+\pi$, this minimum corresponds to a local minimum of $V_n$.
Similarly, for $\mu < 0$ and $X_i < \mu + \pi < X_{i+1}$ for $1 \leq i \leq n -1$ or $X_n < \mu + \pi$ for $i = n$, we get
\begin{align}
 V_n(\mu) &= \frac{1}{n}\,\left(\sum_{j=1}^i(X_j - \mu )^2 + \sum_{j=i+1}^n(X_j - \mu-2\pi)^2\right)\notag\\
	&= \frac{1}{n}\,\sum_{j=1}^n(X_j - \mu)^2 - \frac{4\pi}{n} \sum_{j=i+1}^n(X_j - \mu - \pi), \label{Vneg:eq}
\end{align}
which is minimal for $\mu^{(i)} = \bar{X} - \frac{2\pi (n-i)}{n}$. Again, note that $X_1>\mu+\pi$ for $n=0$ cannot lead to a local minimum at $\mu$. 

With $\bar X_i = \frac{1}{i} \sum_{j=1}^iX_j$ and $\underaccent{\bar}{X}_i = \frac{1}{n-i} \sum_{j=i+1}^nX_j$, we obtain for a local minimizer $\mu^{(i)}$ that $V_n(\mu^{(i)}) = v_{n,i}$ where
\begin{equation}
v_{n,i} = \frac{1}{n}\sum_{j=1}^n(X_j - \bar X)^2 + \begin{cases}
- \big( \frac{2\pi i}{n} \big)^2 + \frac{4 \pi i}{n} \big(\pi + \bar X_i - \bar X \big), & \mu^{(i)} \geq 0, \\
- \big( \frac{2\pi (n-i)}{n} \big)^2 + \frac{4 \pi(n-i)}{n} \big(\pi - \underaccent{\bar}{X}_i + \bar X \big), & \mu^{(i)} < 0, \\
\end{cases} \label{polygon:eq}
\end{equation}
whence $V_n(\mu^{(i)}) = V_n(\mu^{(j)})$ for $i\neq j$ implies that there are $a,b,c,d\in\mathbb Z$, with $a \neq 0$ or $b \neq 0$ s.t. $a \bar X_i + b \bar X_j + c \bar X + d\pi = 0$, the probability of which is zero for continuously distributed data. 
\end{proof}

\begin{Rm}
That intrinsic sample means of continuous distributions are a.s. globally unique has also been observed by \citet[Remark~2.6]{BP03}.
\end{Rm}

\begin{Rm}\label{num:rm}
Corollary~\ref{sample:cor} has application for algorithmically determining an intrinsic sample mean: in order to do so, determine the minimizers of $V_n(\mu^{(i)})$ for $\mu^{(i)} \equiv \bar{X} + \frac{2\pi i}{n} \mod 2 \pi$, $i=1, \dots, n$; this requires as many steps as there are data points in the sample. In fact, \eqref{polygon:eq} easily leads to an implementaion requiring $O(n)$ time for computing the intrinsic mean(s) of a sorted sample. An example is shown in Figure~\ref{figv}; note that not all points on the polygon form candidates $\mu^{(i)}$: if $\bar X > \frac{2\pi i}{n}$ then $\bar X - \frac{2\pi i}{n}$ cannot be a minimizer. Also note that $V_n(\mu^{(i)})$ and $v_{n,i}$ only agree if $\mu^{(i)}$ is a local minimizer.

For the computation of population means, Proposition~\ref{unique-minimizer:prop} allows for a similar procedure: compute in each interval with density less than $1/(2\pi)$ the unique minizer. 
\end{Rm}

\begin{figure}[!tbp]
\centering
\includegraphics[width=0.8\textwidth]{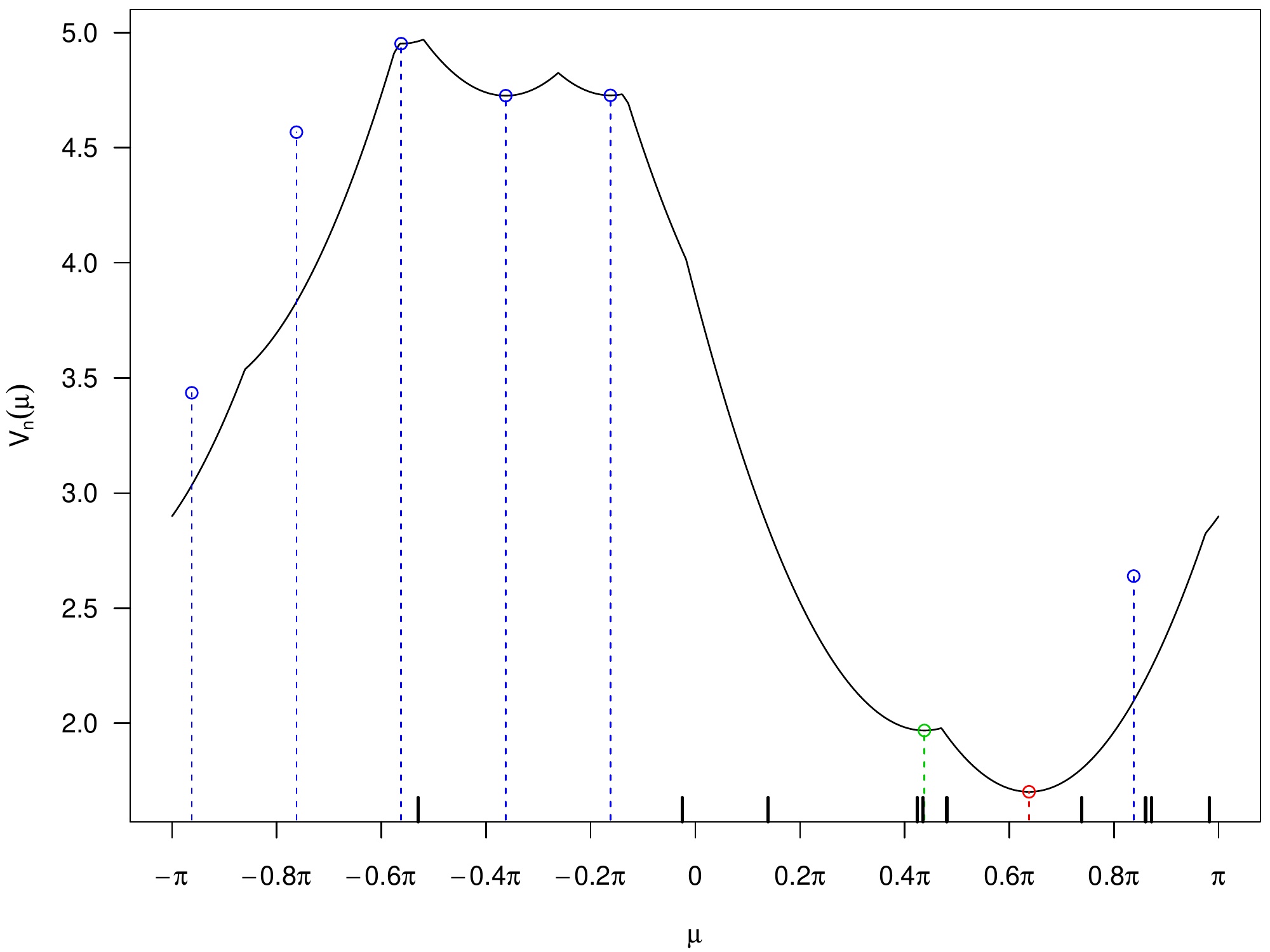}
\caption{\label{figv}Numerical example for determining a sample's intrinsic mean; see Remark\-~\ref{num:rm}: $n=10$ observations $X_1, \dots, X_{10}$ from a wrapped normal distribution are indicated as vertical strokes at the bottom; the curve depicts their $V_n(\mu)$; vertical dashed lines give the $\mu^{(i)}$ lying on a regular polygon from the proof of Corollary~\ref{sample:cor}, the corresponding values $v_{n,i}$ from \eqref{polygon:eq} are indicated by circles; the average $\bar X$ of the data is shown in green, the intrinsic mean in red.}
\end{figure}

Finally, we give an illustration to Corollary~\ref{V-const:Cor}. 

\begin{Ex}\label{const:ex}
Suppose that $X$ is uniformly distributed on $[-\pi,-\pi + \delta\pi]\cup [\pi -\delta\pi,\pi)$ with $0 \leq \delta \leq \frac{1}{2}$ and total weight $0\leq \alpha\delta\leq 1$, i.e. $0 \leq \alpha \leq \delta^{-1}$ giving a density of $\frac{\alpha}{2\pi}$ near $\pm \pi$, and with a point mass of weight $1 - \alpha\delta$ at $0$. Then in case of $\alpha=1$, by  Corollary~\ref{V-const:Cor}, $V(\mu)$ is constant for $-\delta\pi\leq \mu \leq \delta\pi$, and $[-\delta\pi,\delta\pi]$ is precisely the set of intrinsic means. Moreover for $\alpha>1$, $\{-\alpha\delta\pi,\alpha\delta\pi\}$ is the set of intrinsic means whereas for $\alpha < 1$, $0$ is the unique intrinsic mean. 
\end{Ex}
\begin{proof}
Indeed, for $0\leq \mu\leq \delta\pi$ we have
\begin{align*}
 V(\mu) &= (1-\alpha\delta)\,\mu^2 + \,\frac{\alpha}{2\pi} \left(\int_{\pi-\delta\pi}^{\pi+\mu}(x-\mu)^2\,dx + \int_{-\pi+\mu}^{-\pi + \delta\pi}(x-\mu)^2\,dx\right)\\
&= (1-\alpha\delta)\,\mu^2 + \,\frac{\alpha}{6\pi} \Big(\underbrace{ \pi^3 - \big((1-\delta)\pi-\mu\big)^3 + \big(-(1-\delta)\pi-\mu\big)^3 + \pi^3}_{=2\pi^3-2\,(1-\delta)^3\pi^3\,- 2\cdot 3 \,(1-\delta)\pi\,\mu^2}\Big)\\
&= \frac{\alpha}{3}(3\delta - 3\delta^2 + \delta^3) \pi^2 + \left(1-\alpha\right)\mu^2\,
\end{align*}
which is constant in $\mu$ for $\alpha=1$, minimal for $\mu =0$ in case of $\alpha<1$ and minimal for $\mu =\delta\pi$ in case of $\alpha>1$.
On the other hand for $\delta\pi\leq \mu \leq \pi$ we have
\begin{align*}
 V(\mu) &= (1-\alpha\delta)\,\mu^2 + \,\frac{\alpha}{2\pi} \int_{\pi-\delta\pi}^{\pi+\delta\pi}(x-\mu)^2\,dx\\
&= (1-\alpha\delta)\,\mu^2 + \,\frac{\alpha}{6\pi} \Big(\underbrace{\big((1+\delta)\pi-\mu\big)^3 - \big((1-\delta)\pi-\mu\big)^3}_{=((1+\delta)^3-(1-\delta)^3)\pi^3 - 3((1+\delta)^2-(1-\delta)^2)\pi^2\,\mu + 2\cdot 3\,\delta\pi\,\mu^2}\Big)\\
&= \frac{\alpha}{3} (3 \delta + \delta^3) \pi^2 - 2 \alpha\delta \pi\, \mu + \mu^2\\
&= V(0) +\delta^2\alpha(1-\alpha)\pi^2+ (\mu - \alpha\delta\pi)^2
\end{align*}
which is minimal for $\mu = \alpha\delta\pi$. In case of $\alpha =1$ this minimum agrees with $V(0)$, in case of $\alpha < 1$ it is larger than $V(0)$, and in case of $\alpha > 1$ it is smaller than $V(\delta\pi)$. 
\end{proof}

\section{Asymptotics}\label{CLT:scn}

The strong law of large numbers established by \cite{Z77} for minimizers of squared quasi-metrical distances applied to the circle with its intrinsic metric, which renders it a compact space for which the sequence of $\mu_n$ necessarily features an accumulation point, gives the following theorem; cf. also \citet[Theorem~2.3(b)]{BP03}
\begin{Th}\label{SLLN:thm}
If $\mu^*$ is the unique minimizer of $V$ and $(\mu_n)_{n\in\mathbb N}$ a measurable choice of minimizers of $V_n$, then $\mu_n \rightarrow \mu^*$ a.s.

More generally, if $E_n$ denotes the set of intrinsic sample means, and $E$ the set of intrinsic population means, then
\begin{equation}\label{Ziez:strong-law}
\mathop{\bigcap}_{n=1}^\infty\overline{\bigcup_{k=n}^\infty E_n}\ \subset\ E \text{ a.s.}
\end{equation}
\end{Th}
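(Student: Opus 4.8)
The plan is to reduce both assertions to a single uniform strong law of large numbers for the empirical objective, namely $\sup_{\mu\in S^1}|V_n(\mu)-V(\mu)|\to 0$ a.s., and then to exploit the compactness of $S^1$. This is precisely the mechanism underlying Ziezold's theorem \cite{Z77} (and \citet[Theorem~2.3(b)]{BP03}), so one route is simply to verify its hypotheses: that $(S^1,d)$ is a compact metric space and that the integrand $d(\cdot,x)^2$ is continuous and uniformly dominated in $x$. Compactness holds by construction. For the rest I would record that the intrinsic distance satisfies $d(\mu,x)\le\pi$ and, being a distance, is $1$-Lipschitz in $\mu$ uniformly in $x$; hence $|d(\mu,x)^2-d(\mu',x)^2|\le 2\pi\,d(\mu,\mu')$, so the family $\{d(\cdot,x)^2:x\in S^1\}$, and with it every $V_n$ as well as $V$, is uniformly $2\pi$-Lipschitz.

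For the uniform law I would argue as follows. Fix a countable dense set $D\subset S^1$. For each $p\in D$ the ordinary strong law gives $V_n(p)\to V(p)$ a.s., and intersecting these countably many null sets yields one event of probability one on which $V_n(p)\to V(p)$ for all $p\in D$ simultaneously. On this event the uniform $2\pi$-Lipschitz bound upgrades pointwise convergence on $D$ to uniform convergence on all of $S^1$ by a standard Arzel\`a--Ascoli / Glivenko--Cantelli estimate: given $\varepsilon>0$, cover $S^1$ by finitely many balls of radius $\varepsilon$ centred in $D$, control $V_n-V$ at the centres, and absorb the remainder through the common Lipschitz constant. This delivers $\sup_\mu|V_n(\mu)-V(\mu)|\to 0$ a.s.

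It then remains to convert uniform convergence of the objectives into convergence of their minimisers. Working on the probability-one event above, let $\mu_\infty$ be any accumulation point of a measurable sequence $\mu_n\in E_n$, say $\mu_{n_k}\to\mu_\infty$. For arbitrary $p\in S^1$ we have $V_{n_k}(\mu_{n_k})\le V_{n_k}(p)$ since $\mu_{n_k}$ minimises $V_{n_k}$; letting $k\to\infty$ and using uniform convergence together with the continuity of $V$ yields $V(\mu_\infty)\le V(p)$. As $p$ was arbitrary, $\mu_\infty\in E$, which is exactly the inclusion \eqref{Ziez:strong-law}. The special case is then a formality: if $E=\{\mu^*\}$, then $\mu^*$ is the only accumulation point of $(\mu_n)$ in the compact space $S^1$, and a sequence in a compact metric space with a unique accumulation point converges to it, so $\mu_n\to\mu^*$ a.s.

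The only genuinely delicate step is the passage from pointwise to uniform convergence in the second paragraph; the argmin-to-minimiser conversion and the compactness argument for the unique-minimiser case are soft, and the existence of a measurable selection $\mu_n$ of minimisers is routine given the continuity of $V_n$ on the compact $S^1$.
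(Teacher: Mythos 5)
Your proof is correct, but it takes a genuinely different route from the paper, because the paper does not actually write out a proof at all: it obtains the theorem by direct appeal to Ziezold's strong law of large numbers for minimizers of squared quasi-metrical distances \cite{Z77} (cf. also \citet[Theorem~2.3(b)]{BP03}), adding only the remark that compactness of $S^1$ forces the sequence of minimizers to have accumulation points. What you have written is essentially a self-contained reconstruction of the mechanism behind Ziezold's result, specialized to the circle: the uniform strong law $\sup_{\mu}|V_n(\mu)-V(\mu)|\to 0$ a.s., obtained from the uniform Lipschitz bound $|d(\mu,x)^2-d(\mu',x)^2|\le 2\pi\, d(\mu,\mu')$ (valid since $d$ is a metric bounded by $\pi$), the pointwise SLLN for the bounded variables $d(p,X_j)^2$ on a countable dense set, and a finite covering argument; then the standard argmin-convergence step. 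All steps check out; the only point worth tightening is that to get the full inclusion \eqref{Ziez:strong-law} you should apply your accumulation-point argument to arbitrary points of $\bigcap_{n}\overline{\bigcup_{k\ge n}E_k}$, i.e.\ to limits of arbitrary (not necessarily measurable) selections $y_j\in E_{n_j}$ — your pointwise argument on the probability-one event covers this verbatim, since measurability plays no role there. As for what each approach buys: the paper's citation is shorter and inherits the generality of Ziezold's theorem (separable quasi-metric spaces), whereas your argument makes the result self-contained and shows explicitly where compactness and the boundedness/Lipschitz structure of the intrinsic distance on the circle enter.
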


We now characterize the asymptotic distribution of $\mu_n$ under similar assumptions as in Theorem~\ref{charact:thm}, though additionally requiring that the locally unique intrinsic mean is in fact globally unique.

	\begin{Th}\label{CLT:thm} Assume that the distribution of $X$ restricted to some neighborhood of $-\pi$ features a continuous density $f$, has Euclidean variance $\sigma^2$ and that $\mu^* = 0$ is its unique intrinsic mean. Then the following assertions hold for the intrinsic sample mean $\mu_n$ of $X_1,\ldots, X_n\stackrel{i.i.d.}{\sim}X$:
	\begin{enumerate}
	 \item[(i)] If $f(-\pi) < \frac{1}{2\pi}$ then
	$$\sqrt{n} \,\mu_n \stackrel{D}{\to} {\cal N}\left(0,\frac{\sigma^2}{\big(1-2\pi f(-\pi)\big)^2}\right)\,.$$
	\item[(ii)] If $f(-\pi) = \frac{1}{2\pi}$, and if $f$ is  $(k-1)$-times continuously differentiable in a neighborhood $U$ of $-\pi$ with these $k-1$ derivatives vanishing at $-\pi$ while $f$ is even $k$-times continously differentiable in $U\setminus\{-\pi\}$ with
	$$ 0\neq f^{(k)}(\pi-)=\lim_{\mu\uparrow \pi} f^{(k)}(\mu) = - \lim_{\mu\downarrow -\pi} f^{(k)}(\mu)=-f^{(k)}(-\pi+) < \infty$$
	then
	$$\sqrt{n} \,\sign(\mu_n)\, \vert\mu_n\vert^{k+1} \stackrel{D}{\to} {\cal N}\left(0,\frac{\sigma^2 \big((k+1)!\big)^2}{\big(2\pi f^{(k)}(-\pi+)\big)^2}\right)\,.$$
	\end{enumerate}
	\end{Th}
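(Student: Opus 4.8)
The plan is to treat the intrinsic sample mean $\mu_n$ as a $Z$-estimator and to reduce both assertions to the Euclidean Central Limit Theorem \eqref{Eucl-CLT:eq} for $\bar X$ through a single linearization, the point being that the empirical fluctuations of the mass near the antipode turn out to be asymptotically negligible. First I record the estimating equation. Since $\mu^*=0$ is the \emph{unique} intrinsic mean, Theorem~\ref{SLLN:thm} gives $\mu_n\to0$ a.s., so for large $n$ the minimizer lies in the interior of one of the quadratic pieces of \eqref{Vpos:eq}--\eqref{Vneg:eq}, where $V_n'(\mu_n)=0$. Writing $m_n(\mu)=\frac1n\sum_{j=1}^n\nu^{(X_j)}(\mu)$ --- so that $m_n(\mu)$ is the empirical distribution function near $-\pi$ for $\mu>0$ and minus the empirical tail near $\pi$ for $\mu<0$ --- this condition reads $\mu_n-\bar X-2\pi m_n(\mu_n)=0$. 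With the population counterpart $g(\mu):=\mu-2\pi m(\mu)$, $m(\mu)=\mathbb E\,\nu^{(X)}(\mu)$, and using $\mathbb E X=0$ and $m(0)=0$, subtraction yields the key identity
\[
g(\mu_n)=\bar X+2\pi\bigl(m_n(\mu_n)-m(\mu_n)\bigr).
\]

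Next I expand the population function $g$ near $0$, where all the geometry is encoded through Theorem~\ref{charact:thm}. In case (i), continuity of $f$ at $-\pi$ gives $m'(0)=f(-\pi)$, hence $g(\mu)=\bigl(1-2\pi f(-\pi)\bigr)\mu+o(\mu)$ with $g'(0)=1-2\pi f(-\pi)>0$, so $g$ is strictly increasing through $0$. In case (ii) the two one-sided Taylor expansions of Theorem~\ref{charact:thm}(iii), together with the symmetry hypothesis $f^{(k)}(\pi-)=-f^{(k)}(-\pi+)$ (which, combined with $(-1)^k f^{(k)}(\pi-)<0$ from Theorem~\ref{charact:thm}(iii), forces $k$ to be odd), collapse into the single odd expansion
\[
g(\mu)=-\frac{2\pi f^{(k)}(-\pi+)}{(k+1)!}\,\sign(\mu)\,|\mu|^{k+1}+o\bigl(|\mu|^{k+1}\bigr),
\]
whose leading coefficient $A:=-2\pi f^{(k)}(-\pi+)/(k+1)!$ is positive by Theorem~\ref{charact:thm}(iii); again $g$ is strictly increasing through $0$, now with $g^{-1}(u)\sim\sign(u)\,(|u|/A)^{1/(k+1)}$.

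The central step is to show the remainder $m_n(\mu_n)-m(\mu_n)$ is of order $o_p(n^{-1/2})$. The functions $\{\nu^{(\cdot)}(\mu):|\mu|\le\delta\}$ are signed indicators of nested arcs shrinking to the antipode, a VC class with envelope of $L^2$-norm $O(\sqrt\delta)$; hence the empirical process $Z_n(\mu):=\sqrt n\bigl(m_n(\mu)-m(\mu)\bigr)$ is asymptotically equicontinuous near $0$ with $Z_n(0)=0$, so $\mu_n\to0$ forces $Z_n(\mu_n)\to0$ in probability and therefore $m_n(\mu_n)-m(\mu_n)=o_p(n^{-1/2})$. The key identity thus linearizes to $g(\mu_n)=\bar X+o_p(n^{-1/2})$; since $g$ is invertible near $0$ and $\bar X=O_p(n^{-1/2})$, this already yields the correct rate ($\mu_n=O_p(n^{-1/2})$ in (i), $=O_p(n^{-1/(2(k+1))})$ in (ii)) and, inverting the expansions above,
\[
\bigl(1-2\pi f(-\pi)\bigr)\sqrt n\,\mu_n=\sqrt n\,\bar X+o_p(1),\qquad A\,\sqrt n\,\sign(\mu_n)\,|\mu_n|^{k+1}=\sqrt n\,\bar X+o_p(1),
\]
in cases (i) and (ii) respectively. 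Since $\sqrt n\,\bar X\stackrel{D}{\to}\mathcal N(0,\sigma^2)$ by \eqref{Eucl-CLT:eq}, Slutsky's lemma delivers both limit laws, the variances being $\sigma^2/(1-2\pi f(-\pi))^2$ and $\sigma^2/A^2=\sigma^2((k+1)!)^2/(2\pi f^{(k)}(-\pi+))^2$ as claimed.

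The main obstacle I anticipate is the rigorous, uniform control of the antipodal empirical process at the random, vanishing location $\mu_n$ --- i.e. the asymptotic equicontinuity that makes $Z_n(\mu_n)$ vanish and hence removes all stochastic contributions except $\bar X$. The other delicate point is the two-sided sign bookkeeping in (ii): it is precisely the symmetry hypothesis $f^{(k)}(\pi-)=-f^{(k)}(-\pi+)$ that renders $g$ monotone through $0$, so that the limit is a genuine centered normal rather than a folded or one-sided law; the polygon description of the candidate minimizers in Corollary~\ref{sample:cor} can be used to make the localization of $\mu_n$ to a single quadratic piece fully explicit.
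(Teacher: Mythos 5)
Your proposal is correct and follows essentially the same route as the paper's proof: the first-order condition $V_n'(\mu_n)=0$ (justified by the a.s.\ absence of observations antipodal to $\mu_n$, via Theorem~\ref{charact:thm}), a Taylor expansion of the expected antipodal mass, negligibility of the empirical fluctuation of that mass at the random point $\mu_n$, and Slutsky combined with the Euclidean CLT \eqref{Eucl-CLT:eq}. The only difference is in how that negligibility is certified: you invoke VC-class/asymptotic-equicontinuity arguments to show $m_n(\mu_n)-m(\mu_n)=o_p(n^{-1/2})$, whereas the paper uses the elementary bound that the variance of the Bernoulli indicators is at most their expectation, giving a uniform $O_P\bigl(\sqrt{\mu/n}\bigr)$ term --- your formulation is a more explicit justification of the uniformity the paper asserts in passing.
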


	\begin{proof} With the indicator function 
	$$\chi_A(X) = \left\{\begin{array}{rl}1&\mbox{ if } X\in A\\0&\mbox{ if } X\not \in A\end{array}\right.$$
	we have that 
	$$\nu^{(X)}(\mu) = \left\{\begin{array}{rl}\chi_{[-\pi,\mu-\pi)}(X_j)&\mbox{ if }\mu >0,\\-\chi_{(\mu+\pi,\pi)}(X_j)&\mbox{ if }\mu <0\end{array}\right.$$
	In consequence of Theorem~\ref{charact:thm}, if $\mu_n$ is an intrinsic sample mean, a.s. none of the $X_j$ ($j=1,\ldots,n$) can be opposite of $\mu_n$. Since the sample mean $\mu_n$ minimizes $V_n(\mu)$, we have hence with the well defined derivative $\frac{d~}{d\mu} V_n(\mu_n)$ that
	\begin{equation}\label{sample_mean:eq}0 = \frac{1}{2}\, \frac{d~}{d\mu}V_n(\mu_n) = \left\{\!\begin{array}{ll} \mu_n - \bar{X} - 2\pi \frac{1}{n}\,\sum_{j=1}^n \chi_{[-\pi,\mu_n-\pi)}(X_j)&\mbox{ for }\mu_n \geq 0\,, \\
	\mu_n - \bar{X} + 2\pi \frac{1}{n}\,\sum_{j=1}^n \chi_{(\mu_n+\pi,\pi)}(X_j)&\mbox{ for }\mu_n <0\,,
	                                         \end{array}\right.
	\end{equation}
	cf. \eqref{Vpos:eq} and \eqref{Vneg:eq}.
	Under the assumptions of (i) above let us now compute
	\begin{eqnarray*}
	 \mathbb E \left(\chi_{[-\pi,\mu -\pi)}(X)\right) &=& \int_{-\pi}^{\mu-\pi}f(x)\,dx~=~
	\mu f(-\pi) + o(\mu) 
	\end{eqnarray*}
	in case of $\mu >0$ and similarly
	\begin{eqnarray*}
	 \mathbb E \left(\chi_{(\mu+\pi,\pi)}(X)\right) 
	&=& -\mu f(-\pi) + o(\mu)\, 
	\end{eqnarray*}
	in case of $\mu<0$. In consequence, using that the variance of these Bernoulli variables is less or equal than their expectation, we get
	\begin{eqnarray*} \frac{1}{n}\sum_{j=1}^n \chi_{[-\pi,\mu-\pi)}(X_i)
	 &=& \mu f(-\pi) + O_{P}\left(\frac{\sqrt{\mu}}{\sqrt{n}}\right) + o(\mu),\\
	\frac{1}{n}\sum_{j=1}^n \chi_{(\mu+\pi,\pi)}(X_i)
	 &=& -\mu f(-\pi) + O_{P}\left(\frac{\sqrt{-\mu}}{\sqrt{n}}\right) + o(\mu)
	\end{eqnarray*}
	for $\mu>0$ and $\mu<0$, respectively, the bounds for $O_P(\sqrt{\mu/n})$ being uniform in $\mu$. In conjunction with (\ref{sample_mean:eq}), and using the SLLN for $\mu_n$ (Theorem~\ref{SLLN:thm}), i.e. $\mu_n = o_P(1)$, we obtain
	\begin{equation*}
	\sqrt{n} \Big( \big(1 - 2\pi f(-\pi) \big) \mu_n - \bar X \Big) = o_P(1)\,.
	\end{equation*}
This gives assertion (i).

	Under the assumptions of (ii) we get, noting that by Theorem~\ref{charact:thm}(iii)\linebreak $\sign(f^{(k)}(-\pi+)) = -1$ while $\sign(f^{(k)}(\pi-)) = (-1)^{k+1}$,
	\begin{eqnarray*}
	 \left.\begin{array}{ll}\mathbb E \left(\chi_{[-\pi,\mu -\pi)}(X)\right)\\\mathbb E \left(\chi_{(\mu +\pi,\pi)}(X)\right)\end{array}\right\}
	&=& \frac{|\mu|}{2\pi} + \frac{\vert \mu \vert^{k+1}}{(k+1)!} f^{(k)}(-\pi+) + o(\mu^{k+1})\, .
	\end{eqnarray*}
	In consequence, as above, 
	we infer from (\ref{sample_mean:eq}) that
	$$\sqrt{n}\Bigg( 2\pi\,\frac{\sign(\mu_n) \vert\mu_n\vert^{k+1}}{(k+1)!}\, f^{(k)}(-\pi+) + \bar X \Bigg) = o_P(1)\, ,$$
	which gives assertion (ii).
	\end{proof}

	\begin{Rm}\label{rate-mu-n:cor} We note that under the assumptions in Theorem~\ref{CLT:thm}, namely that $f$ differs from the uniform distribution at $-\pi$ for the first time in its $k$-th derivative there, then
	the convergence rate of $\mu_n$ is precisely $n^{-\frac{1}{2(k+1)}}$.

Comparing with \eqref{Eucl-CLT:eq}, we see that the asymptotic distribution of $\bar X$ is more concentrated than the one of the intrinsic mean unless $f(-\pi) = 0$, the intrinsic mean exhibiting slower convergence rates than $\bar X$ if $f(-\pi) = \frac{1}{2\pi}$.
	\end{Rm}

\section{Simulation}\label{Simu:scn}

For illustration of the theoretical results we consider here examples exhibiting different convergence rates: we generalize the density from Example~\ref{const:ex} to behave like a polynomial of order $k$ near $\pm \pi$. To be precise, we assume that the distribution of $X$ is composed of a point mass at 0 with weight $1 - \alpha\delta$ with $0 \leq \alpha \leq \delta^{-1}$, $0 \leq \delta \leq \frac{1}{2}$, and of a part absolute continuous w.r.t. Lebesgue measure with density $g$ where $g(-\pi + x) = g(\pi - x) = f(x)$ for $0 \leq x \leq \pi$, and
\begin{equation*}
f(x) = \begin{cases}
\frac{\alpha}{2\pi} &\text{for } 0 \leq \vert x \vert \leq \delta \pi, \\
0 & \text{else}
\end{cases}
\end{equation*}
for $k = 0$, while for $k > 0$
\begin{equation*}
f(x) = \begin{cases}
\frac{\alpha}{2\pi} - \frac{\alpha}{4\pi} \Big(\frac{\vert x \vert}{\delta\pi}\Big)^k &\text{for } 0 \leq \vert x \vert \leq \delta \pi, \\
\frac{\alpha}{4\pi} \Big(2 - \frac{\vert x \vert}{\delta\pi}\Big)^k &\text{for } \delta\pi \leq \vert x \vert \leq 2\delta \pi, \\
0 & \text{else.}
\end{cases}
\end{equation*}
Note that $f(-\pi) = \frac{\alpha}{2\pi}$ while $\int_{-\pi}^\pi f(x) dx = \alpha\delta$. We simulated several examples with parameters given in Table~\ref{tabsim}, the corresponding densities are shown in Figure~\ref{figdens}.

\begin{table}[!tbp]
\centering
\begin{tabular}{l|cccl}
 & $\alpha$ & $k$ & $\delta$ & color \\ 
\hline
case 0a & 0.9 & 0 & 0.4 & blue \\ 
case 0b & 1 & 0 & 0.4 & red \\ 
case 1a & 0.9 & 1 & 0.4 & green \\ 
case 1b & 1 & 1 & 0.4 & brown \\ 
case 2 & 1 & 2 & 0.4 & violet \\ 
case 3 & 1 & 3 & 0.4 & purple \\ 
\end{tabular}
\caption{\label{tabsim}Parameters for the simulations, and respective colors used in Figures~\ref{figdens} and~\ref{figmad}.}
\end{table} 

Example~\ref{const:ex} is the special case for which $k = 0$; in particular, for case~0b, we computed $10,000$ intrinsic means, each of which was based on $n = 10,000$ i.i.d. observations, a histogram of these intrinsic means is shown in Figure~\ref{fighist}. There, the distribution of the intrinsic sample mean for case~0b appears to be composed of two parts: an essentially constant density over $[-\delta\pi,\delta\pi]$, the set of the intrinsic population mean, and two peaks with their modes located close to the interval's endpoints. Their presence can be explained as follows: approximately with probability one half, we observe less than $(1 - \delta) n$ zeros, whence there is too little mass at $0$ to keep the intrinsic sample mean in the interval $[-\delta\pi,\delta\pi]$ but for $n$ large enough there will with large probability still be many zeros so that the intrinsic sample mean cannot move far away from that interval. According to Theorem~\ref{SLLN:thm}, these peaks' locations converge to the interval's boundary when $n \rightarrow \infty$. In fact, our simulations suggest that we have $\cap_{n=1}^\infty\overline{\cup_{k=n}^\infty E_n} \cap A \neq \emptyset$ with positive probabiliy for any $A \subset [-\delta\pi,\delta\pi]$ having non-zero Lebesgue measure.

\begin{figure}[!ptb]
\centering
\includegraphics[width=0.8\textwidth]{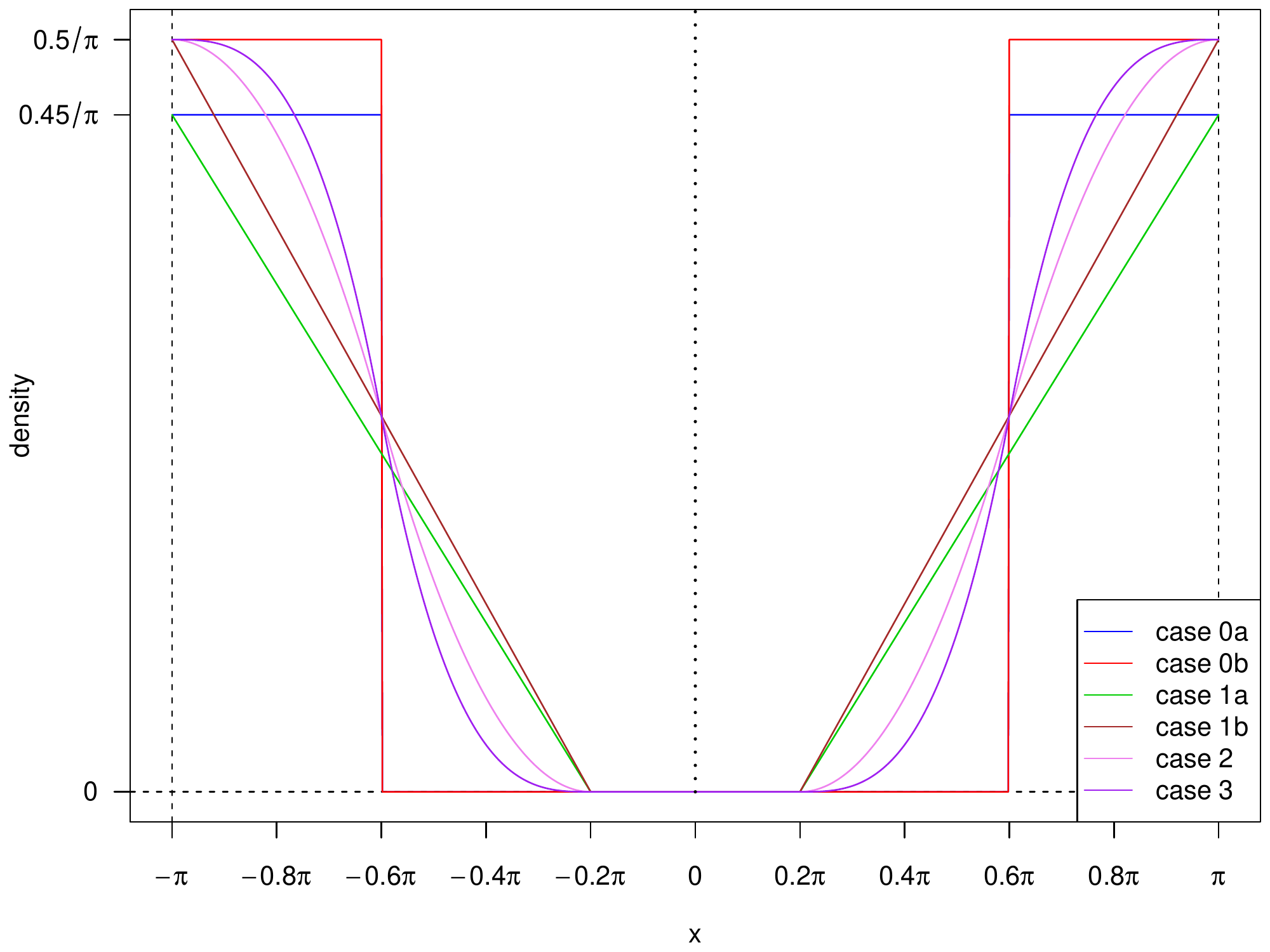}
\caption{\label{figdens}Densities of the simulated distributions with parameters in Table~\ref{tabsim}; the dotted vertical line in the centre indicates the point mass at $0$.}
\end{figure}

\begin{figure}[!ptb]
\centering
\includegraphics[width=0.8\textwidth]{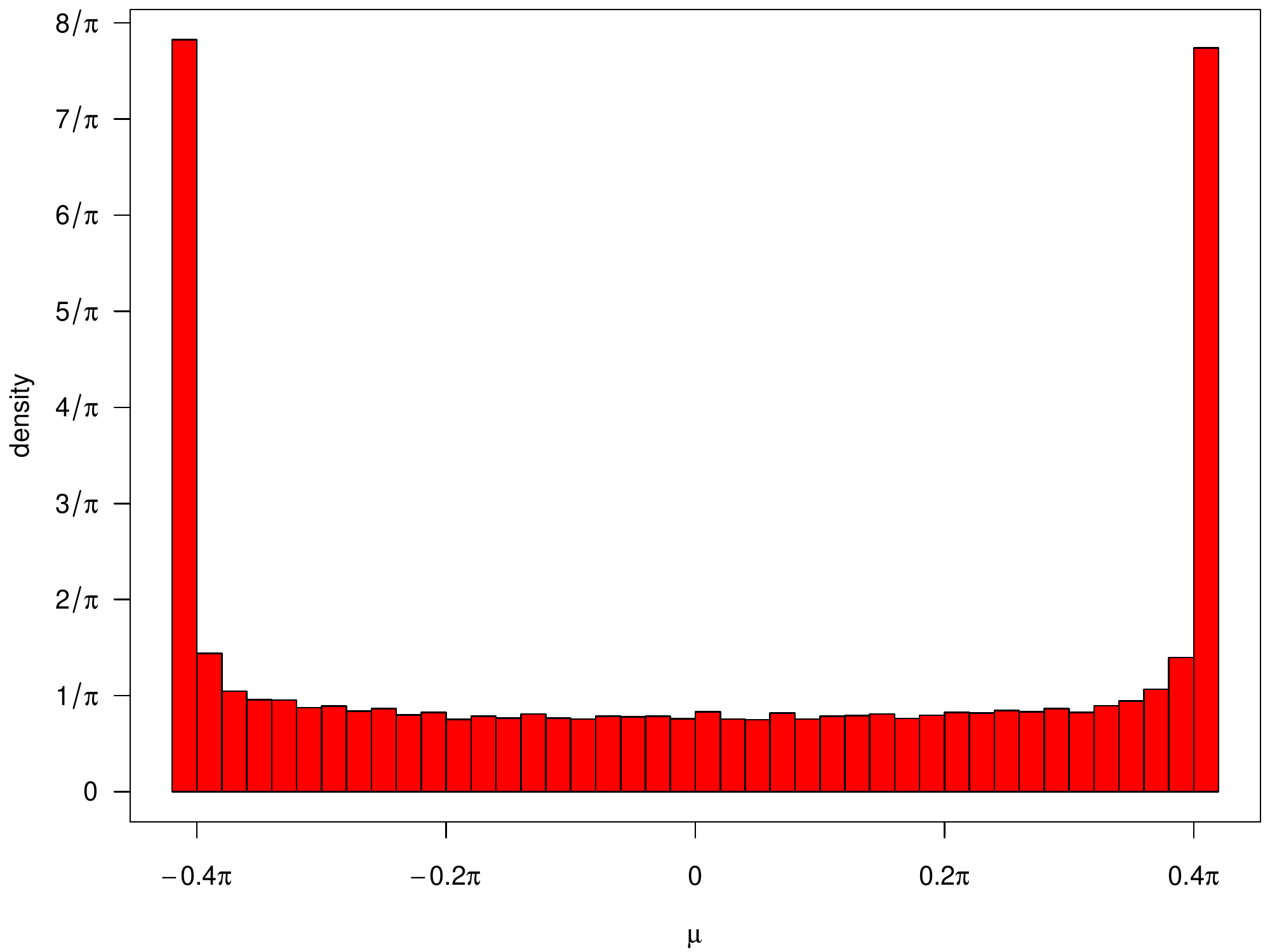}
\caption{\label{fighist}Histogram of $10,000$ intrinsic means each based on $n = 10,000$ i.i.d. draws from case 0b in Table~\ref{tabsim}.}
\end{figure}

\begin{figure}[!ptb]
\centering
\includegraphics[width=0.8\textwidth]{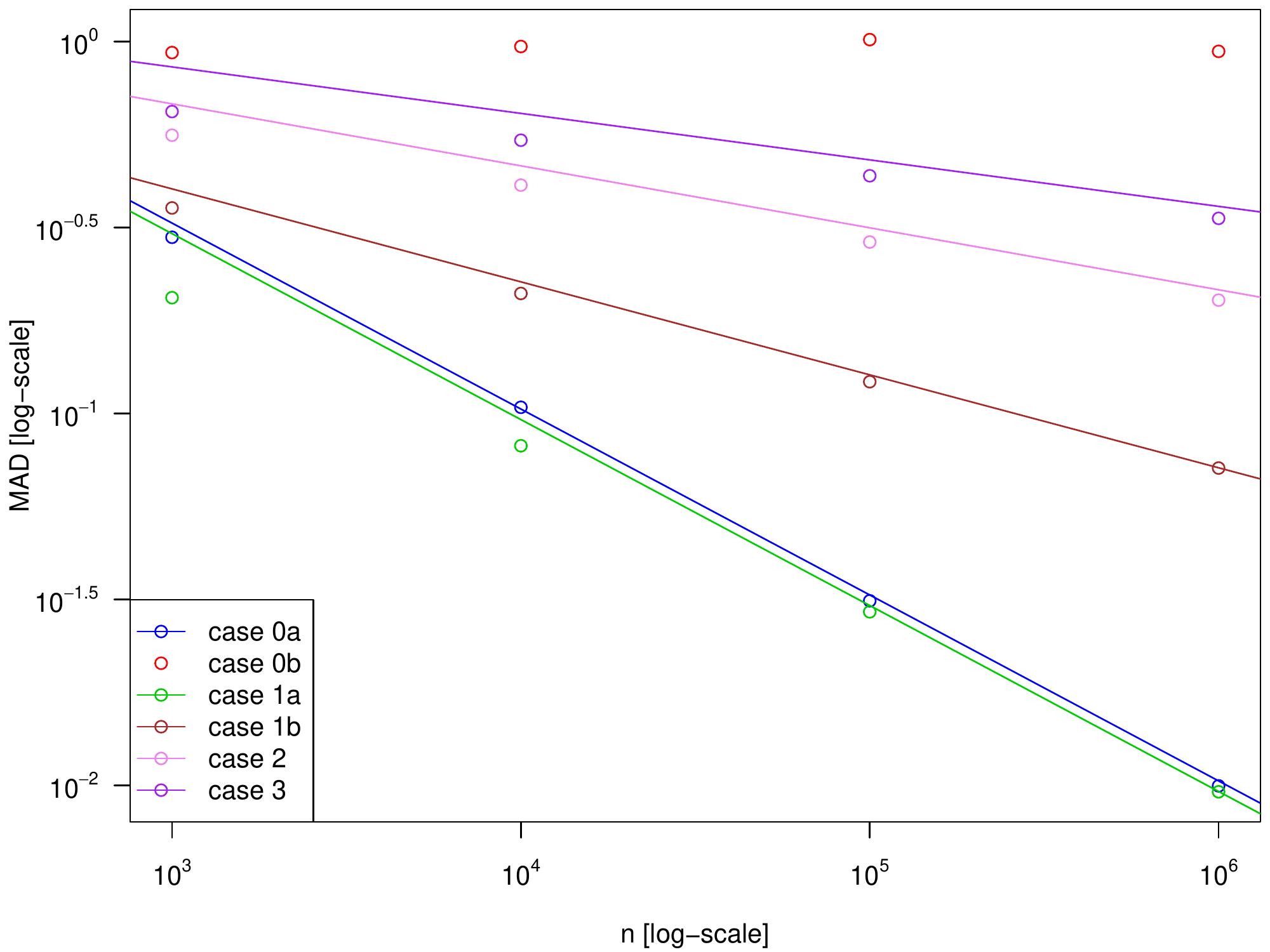}
\caption{\label{figmad}Median absolute deviations of the intrinsic mean of $n$ i.i.d. draws from the simulated distributions with parameters in Table~\ref{tabsim}, based on $1,000$ repetitions; lines give the values predicted using the asymptotic distribution.}
\end{figure}

\begin{figure}[!ptb]
\centering
\includegraphics[width=0.8\textwidth]{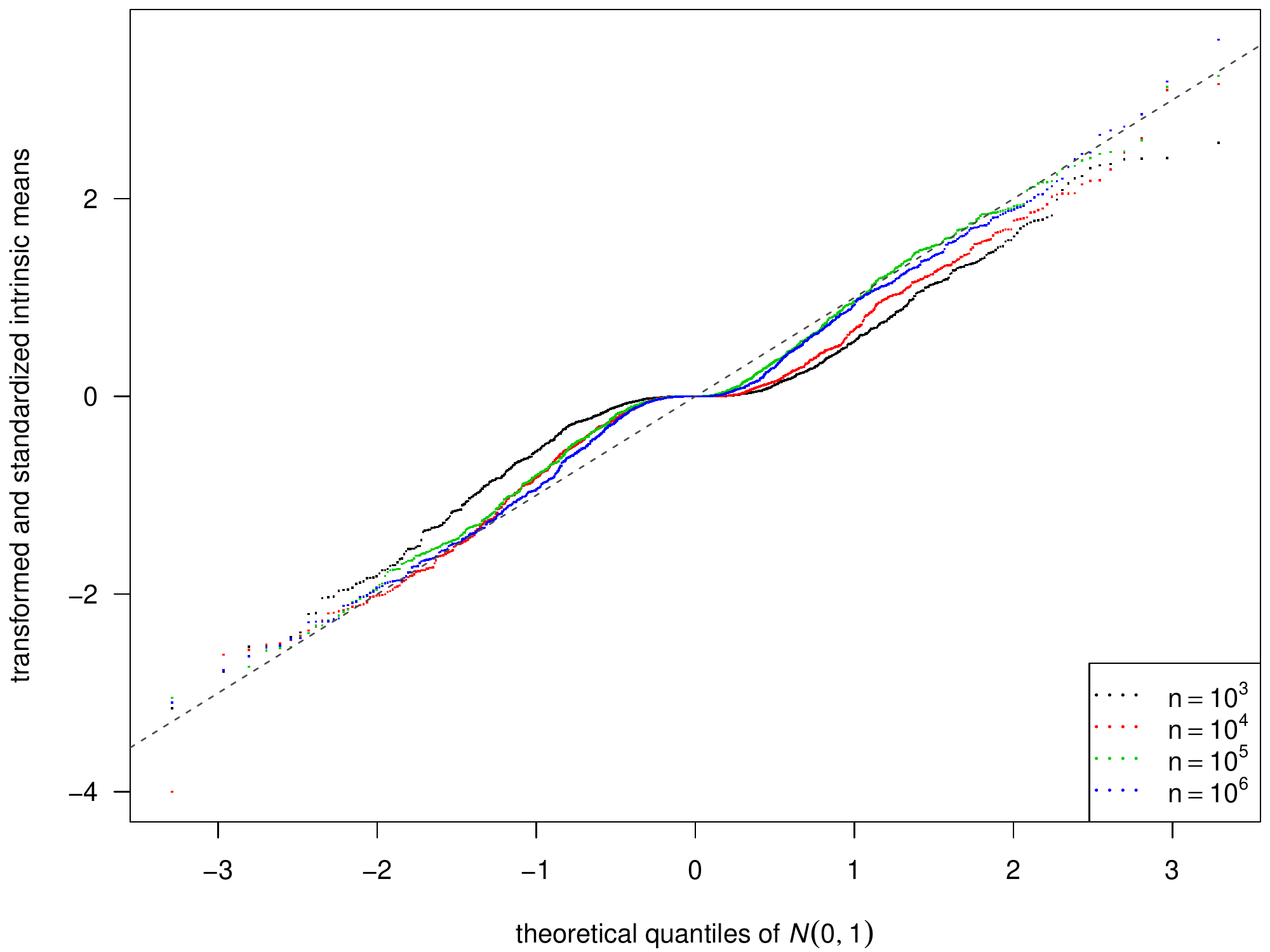}
\caption{\label{figqq}Q-q-plots of $1,000$ intrinsic means, transformed and standardized according to the asymptotic distribution, each based on $n$ i.i.d. draws from case 1b in Table~\ref{tabsim}; the dashed line depicts the identity line.}
\end{figure}

We also determined the median absolute deviation (MAD) of the intrinsic sample means for the different cases in Table~\ref{tabsim}, and compared them to the MAD predicted from the asymptotic distribution given in Theorem~\ref{CLT:thm} (except for case~0b where it does not apply), see Figure~\ref{figmad}.
For this, one easily computes
\begin{align*}
\sigma^2 &= 2 \int_0^{2\delta\pi} (\pi - x)^2 f(x) dx
= 2 \int_0^{\delta\pi} (\pi - x)^2 f(x) dx + 2 \int_{\delta\pi}^{2\delta\pi} (\pi - x)^2 f(x) dx
\\ &= -\frac{2\alpha}{6\pi} [(\pi - x)^3]_0^{\delta\pi}
- \frac{\alpha}{2\pi(\delta\pi)^k} [\tfrac{\pi^2}{k+1} x^{k+1} - \tfrac{2\pi}{k+2}x^{k+2} + \tfrac{1}{k+3}x^{k+3}]_0^{\delta\pi}
\\ &\quad + \frac{\alpha}{2\pi(\delta\pi)^k} [\tfrac{\pi^2(1-2\delta)^2}{k+1}x^{k+1} + \tfrac{2\pi(1-2\delta)}{k+2}x^{k+2} + \tfrac{1}{k+3}x^{k+3}]_0^{\delta\pi}
\\ &= \frac{\alpha\pi^2}{3} \big(1 - (1-\delta)^3 \big) - \frac{\alpha\delta\pi^2}{2(k+1)} \big( 1 - (1 - 2\delta)^2 \big) + \frac{\alpha(\delta\pi)^2}{k + 2} (2 - 2 \delta),
\end{align*}
as well as $f^{(k)}(0+) = - \alpha \frac{k!}{4\pi(\delta\pi)^k}$.
Note that we chose the MAD as it commutes with the power transforms in Theorem~\ref{CLT:thm}(ii), as opposed to the standard deviation. Thus, we found the rates predicted in Remark~\ref{rate-mu-n:cor}, namely $n^{-\frac{1}{2}}$ if $\alpha < 1$ and $n^{-\frac{1}{2(k+1)}}$ if $\alpha = 1$ and $k > 0$, to match the observed MAD in Figure~\ref{figmad} well.

Furthermore, for case~1a, Figure~\ref{figqq} shows normal q-q-plots for the intrinsic sample means, transformed and standardized according to their asymptotic distribution, i.e. for $\sqrt{n} \,\sign(\mu_n)\, \vert\mu_n\vert^{k+1} \frac{-2\pi f^{(k)}(0+)}{(k+1)!\sigma)}$. Note that there appears to be a peak at $0$, visible from the curve getting almost horizontal there, which decreases with increasing $n$.

\section{Discussion}

	Let us conclude with a discussion of our rather comprehensive results on locus, uniqueness, asymptotics and numerics for intrinsic circular means. In the past, there has been a fundamental mismatch between distributional and asymptotic theory on non-Euclidean manifolds. While a great variety of distributions for circular data had been developed which very well reflect the non-Euclidean topology, e.g. nowhere vanishing densities, the central limit theorem had only been available for distributions essentially restricted to a subset of Euclidean topology. On the circle we have eliminated this mismatch. In particular our results state that 
	\begin{quote}{\it the more the non-Euclidean topology is reflected by a probabilty distribution, i.e. the closer the distribution near the antipode is to the uniform distribution, the larger the deviation from Euclidean asymptotics.} \end{quote}
	We expect that similar results are valid for general manifolds where the antipode needs to be replaced by the \emph{cut locus} $C(\mu)$. Unless $C(\mu)$ carries positive mass, $V(\mu)$ is still differentiable at $\mu$, see \cite{Pn06}. From what we observed for the circle, we conjecture that $C(\mu)$ cannot carry mass if $\mu$ is a local minimizer. However, generalizing all results obtained here to arbitrary Riemannian manifolds is the subject of future research, but note that our results at once carry over to tori, them being cross products of circles.

\bibliographystyle{../../../BIB/elsart-harv}
\bibliography{../../../BIB/shape,../../../BIB/biomech}

\end{document}